\documentclass[11pt, a4paper, onecolumn, confidential, copyright, gdm]{google}

\uselogo{gdm} 

\usepackage{amsmath}
\usepackage{microtype}
\usepackage{graphicx}
\usepackage{booktabs} 

\usepackage{pifont}

\usepackage{caption}
\usepackage{subcaption}
\usepackage[cc]{titlepic}

\usepackage{amsmath}
\usepackage{amsthm}
\usepackage{amsbsy,amsfonts,amssymb}
\usepackage{bm}
\usepackage{multirow}
\usepackage{wrapfig}
\usepackage{thmtools}
\usepackage{cleveref}
\usepackage{xfrac}

\usepackage{tgpagella}
\usepackage[utf8]{inputenc} 
\usepackage[T1]{fontenc}    
\usepackage{authblk}
\usepackage{algorithm}
\usepackage{algorithmic}
\usepackage[authoryear, sort&compress, round]{natbib}

\usepackage[export]{adjustbox}
\usepackage{makecell}


\usepackage{nicefrac}       
\usepackage[svgnames]{xcolor}         

\usepackage{enumitem}
\usepackage{rotating}   
\usepackage{tablefootnote}
\usepackage[title]{appendix}

\usepackage{mathtools}
\usepackage{url}  
\usepackage{hyperref} 
\usepackage{cleveref} 

\usepackage{mdframed} 

\usepackage[utf8]{inputenc} 
\usepackage[T1]{fontenc}    
\usepackage{hyperref}       
\usepackage{url}            
\usepackage{booktabs}       
\usepackage{amsfonts}       
\usepackage{nicefrac}       
\usepackage{microtype}      
\usepackage{xcolor}         

\usepackage{tabularx}
\usepackage{booktabs}   
\usepackage{multirow}   
\usepackage{xcolor}     
\usepackage{colortbl}   

\usepackage{listings} 

\usepackage{graphicx} 

\definecolor{lightred}{rgb}{1,0.9,0.9} 
\definecolor{lightgreen}{rgb}{0.9,1,0.9} 

\usepackage{tikz}
\usetikzlibrary{trees}
\usepackage{forest}
\usepackage{mathtools}

\usepackage[mathscr]{eucal}

\newcommand{\Edocids}{\operatorname{E}_{\text{docIDs}}}
\newcommand{\Eprime}{\operatorname{E}^{\prime}}
\newcommand{\Vdocids}{\mathscr{V}_{\text{docIDs}}}

\usepackage{tcolorbox}
\tcbuselibrary{skins}
\tcbset{
    promptbox/.style={
        arc=5mm,
        boxrule=0.5pt,
        colframe=black,
        colback=white,
    }
}


\declaretheoremstyle[
  spaceabove=\topsep, spacebelow=\topsep,
  headfont=\normalfont\bfseries,
  notefont=\mdseries, notebraces={(}{)},
  bodyfont=\normalfont\itshape,
  postheadspace=\newline,
]{mythmstyle}

\declaretheorem[style=mythmstyle, name=Proposition, numberwithin=section]{prop}
\declaretheorem[style=mythmstyle, name=Corollary, numberwithin=section]{corollary}

\declaretheorem[style=definition, numberwithin=section]{definition}

\newcolumntype{C}[1]{>{\centering\arraybackslash}p{#1}}

\title{Autoregressive Ranking: Bridging the Gap Between Dual and Cross Encoders}
\date{}

\author[$\pi$]{Benjamin Rozonoyer\footnote{Work done during internship at Google Research.}}
\author[$\lambda$]{Chong You}
\author[$\lambda$]{Michael Boratko}
\author[$\lambda$]{Himanshu Jain}
\author[$\theta$]{Nilesh Gupta\protect\footnotemark[1]}
\author[$\lambda$]{Srinadh Bhojanapalli}
\author[$\lambda$]{Andrew McCallum}
\author[$\lambda$]{~Felix Yu}

\affil[$\pi$]{University of Massachusetts Amherst}
\affil[$\lambda$]{Google DeepMind}
\affil[$\theta$]{The University of Texas at Austin}

\begin{abstract}
     The success of Large Language Models (LLMs) has motivated a shift toward generative approaches to retrieval and ranking, aiming to supersede classical Dual Encoders (DEs) and Cross Encoders (CEs). A prominent paradigm is pointwise Autoregressive Ranking (ARR), where an LLM generates document identifiers (docIDs) token-by-token to enable ranking via beam search. ARR offers the promise of superior expressivity compared to DEs while avoiding the prohibitive computational cost of CEs. However, a formal theoretical foundation for this expressive power has been missing. Moreover, the standard next-token prediction loss is rank-agnostic and inappropriate for finetuning an LLM for ranking tasks.

    In this paper, we first prove that the expressive capacity of ARR is strictly superior to DEs. While a DE requires an embedding dimension that grows linearly with corpus size to achieve arbitrary rankings, ARR can solve it with a constant hidden dimension. We then propose \textsc{SToICaL} (\underline{S}imple \underline{To}ken-\underline{I}tem \underline{Ca}librated \underline{L}oss), a generalized rank-aware training loss for LLM finetuning. By using item-level reweighting and prefix-tree marginalization, we distribute probability mass over valid docID tokens based on their ground-truth relevance. Experiments on WordNet and ESCI datasets verify that our loss suppresses invalid docID generations and significantly improves ranking metrics beyond top-1 retrieval.
\end{abstract}

\begin{document}

\maketitle

\begin{figure}[h] %
    \centering %
    \includegraphics[width=\linewidth]{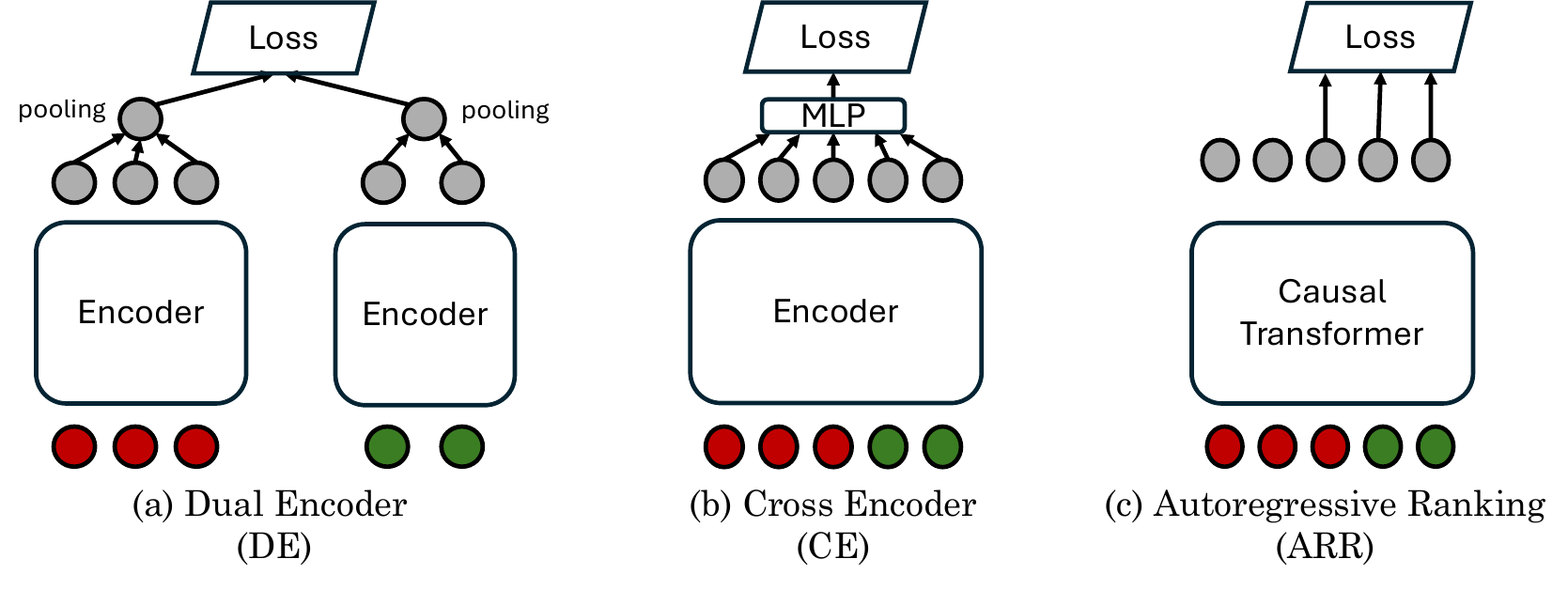}
    \caption{Illustration of different architectures for ranking. Red and green circles represent query and document tokens, respectively. 
    (a) DEs are efficient but have limited expressive power. 
    (b) CEs produce more accurate relevance scores but have high computational cost. 
    Hence, the standard practice is a two-stage pipeline where a DE selects a candidate set and a CE reranks those candidates.
    (c) ARR holds the promise of a unified model for ranking that replaces the DE/CE pipeline.}
    \label{fig:t3-xenc-pt}
\end{figure}

\section{Introduction}\label{sec:introduction}

Ranking documents by their relevance to a user query is a cornerstone of modern information retrieval. 
The dominant paradigm for this task is a multi-stage pipeline. 
The first stage, retrieval, aims to efficiently select a candidate set of documents from a massive corpus. 
This is typically accomplished using \textbf{Dual Encoder (DE)} models, which represent queries and documents as dense vectors and use fast approximate nearest neighbor (ANN) search to find the top matches \citep{karpukhin2020dense,guo2020accelerating}. While efficient, the expressive power of DEs is inherently limited, as the rich interaction between a query and a document is compressed into a single vector similarity.  Consequently, a second reranking stage is often required. In this stage, a more powerful but computationally expensive \textbf{Cross Encoder (CE)} model jointly processes the query and each candidate document to produce a more accurate relevance score \citep{nogueira2019passage}. CEs are infeasible for first-stage retrieval due to their prohibitive cost, which scales linearly with corpus size.

The success of LLMs has inspired a new, unified approach to retrieval and ranking in lieu of the two-stage approach \citep{metzler2021rethinking,li2023large,lee2024can}. 
In this paradigm, an LLM is trained to function as a generative retrieval model, autoregressively decoding the relevant documents or their identifiers (docIDs) token-by-token, conditioned on the query \citep{tay2022transformer}. For ranking, this model can generate a sorted list of documents by employing beam search, effectively ranking documents by their conditional generation probability. 
This approach, which we dub \textbf{Autoregressive Ranking (ARR)}, offers several compelling advantages over the traditional DE/CE pipeline. It collapses the two-stage process into a single model, eliminating the need for a separate ANN index. Furthermore, it is more efficient than CEs as it does not require scoring every document individually, while preserving CEs' cross-attention over already generated docID tokens. Crucially, it has also been argued that \textit{by generating documents token-by-token}, ARR models can capture deeper and more nuanced query-document interactions than DEs \citep{yuan2024generative}; while intuitive, the evidence for this has been primarily empirical. In this paper, we address two critical gaps in the understanding and optimization of ARR models for ranking:

\begin{itemize}[align=left,leftmargin=*,itemsep=0pt,topsep=0pt]
    \item \textbf{We provide a theoretical foundation for the superior expressive capacity of ARR over DEs.} A rigorous analysis of the embedding geometry required for ranking shows that for a DE to achieve any ordering of $k$ documents, its embedding dimension must grow linearly with $k$. In contrast, we prove that an ARR model with constant hidden dimension is theoretically sufficient to rank an arbitrary number of documents. This offers a formal explanation for the advantages of ARR.
    \item \textbf{We introduce \textsc{SToICaL}, a rank-aware generalization of next-token prediction loss for enhanced ranking capabilities.} Most existing generative retrieval methods are trained with standard next-token prediction, an objective which fails to explicitly model the relative ordering of documents, which is crucial for ranking. We introduce a new loss function designed to make the LLM aware of the entire ranked list during training, while keeping the generation and signal length on the order of a single docID rather than a full list, enabling ranking via beam search. 
    We test our method on WordNet and a shopping queries dataset, demonstrating its superior ranking performance. 
\end{itemize}

\section{Related Work}\label{sec:related-work}

\paragraph{Classical IR Losses} Training objectives for retrieval and ranking can be broadly grouped into the categories of \textit{pointwise}, \textit{pairwise}, and \textit{listwise} \citep{liu2009learning}. A \textit{pointwise} loss as in \citet{chen2009ranking} of the form $\mathcal{L}(q, d)$ encourages the model to assign high scores to more relevant items and vice versa; each training example consists of a single document to penalize independently of the rest. \textit{Pairwise} loss functions \citep{cao2007learning} of the form $\mathcal{L}(q, d^{+}, d^{-})$ strive to maximize the margin between relevant and irrelevant documents. \textit{Listwise} losses such as \cite{xia2008listwise} of the form $\mathcal{L}(q, d_1, \ldots, d_{n_q})$ jointly take as input the model scores of all documents under consideration for a given query, and can encourage the most flexible scoring configurations at the granularity of permutations rather than individual items and commonly rely on the Plackett-Luce probability model over permutations \citep{luce1959individual, plackett1975analysis}.

\vspace{-0.6em}
\paragraph{Generative Reranking}

Autoregressive decoder LLMs introduce the facet of text generation on top of the simple API of ``model-as-document-scorer''. The score-based loss formulations discussed above remain applicable insofar as a docID's probability can be obtained from the probabilities of its constituent tokens; notable LLM-based reranking methods \citep{sachan2022improving,drozdov2023parade} rely specifically on LLM-based log-likelihood to obtain scores. Meanwhile, the terminology of ``pointwise'' vs ``listwise'' naturally acquires a second meaning, i.e., whether the text string which the LLM generates comprises a single docID or a ranked list of docIDs (separated by a delimiter and in descending order, as per common convention). Our ``pointwise'' approach relates to both these senses (scoring just one document, or generating just a single docID string), since steering LLM probabilities to be more rank-aware also directly impacts the rankings produced by a decoding strategy like beam search at inference time. \textit{The generative approach naturally allows us to perform ranking with a single model call, in contrast to CEs which evaluate the score for every document separately. The compact pointwise output format mitigates snowballing errors and is efficient to train under our rank-aware modification to next-token prediction.}

A handful of examples illustrate the diverse reranking paradigms afforded by generative LLMs. RankVicuna \citep{pradeep2023rankvicuna} and RankZephyr \citep{pradeep2023rankzephyr} explore training/distillation choices, using instruction tuning with next token prediction, to achieve a robust open-source zero-shot LLM for listwise ranking. LTRGR \citep{li2024learning}, on top of a next-token-prediction-based ``learning-to-generate'' phase, introduces a ``learning-to-rank'' phase that maximizes the margin between the autoregressively-computed scores of relevant and irrelevant passage identifiers. RecRanker \citep{luo2024recranker} trains an LLM with instruction tuning on a combination of pointwise, pairwise, and listwise ranking objectives for top-$k$ sequential recommendation, and at inference time aggregates the three types of responses into a hybrid ranking by means of respective utility functions. GenRank \citep{huang2025towards} identifies the autoregressive generation mechanism as critical to the effectiveness of generative ranking, more so than the pre-training paradigm. \cite{sun2023chatgpt} explore the ranking capabilities of ChatGPT and GPT-4 and propose a distillation technique to learn smaller specialized rankers. \citet{gupta2025scalable} introduce BlockRank, efficiently tailoring the attention mechanism to in-context ranking. \textit{We argue that autoregressive methods which employ next-token prediction or instruction fine-tuning are relying on a token-level loss which is fundamentally not rank-aware. Our proposed loss function is a simple, principled and effective injection of rank-awareness into next-token prediction — a loss function which is ubiquitous to autoregressive models.}

\vspace{-0.6em}
\paragraph{Parametric Generative Retrieval and DocID Creation} Differentiable Search Index (DSI) \citep{tay2022transformer} instills the docID index in model weights as part of the training objective. A critical component of parametric generative retrieval is effective docID design \citep{li2023multiview} — we take up this question for the ESCI dataset in \Cref{sec:esci-dataset}. Note that we opt to use in-context ranking \citep{gupta2025scalable}, which leverages the LLM's in-context reasoning and copying abilities to perform retrieval and ranking from a pool of documents, with the docIDs provided as part of the prompt \cite{wang2024large}. For in-context ranking, one can interpret the longer prompt as an ``augmented query''; our theory (\Cref{sec:capacity-for-ranking}) and loss function (\Cref{sec:rank-aware-training-loss}), however, generalizes to parametric as well as in-context ranking.

\section{The Capacity for Ranking}\label{sec:capacity-for-ranking}

Let $\mathcal{Q}$ be a set of queries and $\mathcal{D}$ be a corpus of documents (or their docIDs).
Towards establishing a theory of the ARR's model capacity for ranking, we start with a formal definition of the ranking task as follows.
\begin{definition}[Ranking Task]
    We consider a \textbf{ranking task} where for each query $q \in \mathcal{Q}$, the goal is to generate a list $\mathscr{L}(q)  = [d_1(q), \ldots, d_k(q)]$ of relevant documents, $d_i(q) \in \mathcal{D}$. Ordering is significant: $\forall i < j$, $d_i(q)$ is considered more relevant to $q$ than $d_j(q)$, denoted by $d_i(q) \succ d_j(q)$.
\end{definition}

We refer to \textbf{Autoregressive Ranking (ARR)} as the use of a causal language model for ranking tasks.  Such a model predicts the probability of a sequence of tokens conditioned on its preceding context. Given a query $q$, an ARR model defines a conditional probability distribution over document token sequences. The probability of any document $d \in \mathcal{D}$ conditioned on a query $q \in \mathcal{Q}$ is the product of $d$'s token probabilities, calculated autoregressively.\footnote{Fortunately, the autoregressive nature of causal models offers a more efficient solution: Instead of scoring, we can directly generate the high-probability documents using greedy decoding or beam search. By employing beam search, we can generate the most likely documents conditioned on the query, directly yielding a ranked list without the need for exhaustive computation.} 

Both ARR and dual encoders (DEs) are examples of a \textbf{scoring-based ranking architecture}. Such architectures generate a score for each document given a query $q$, with the objective that the scores, when sorted in descending order, align with the ground-truth ordering of the relevant documents, and that scores for irrelevant documents are smaller than scores for relevant ones. We state this formally below.
\begin{definition}[Scoring-based Ranking Architecture]
    A (parametric) \textbf{scoring-based ranking architecture} is a function $f(q, d; \theta): \mathcal{Q} \times \mathcal{D} \rightarrow \mathbb{R}$, which takes a pair $q, d$ and produces a scalar score interpreted as the similarity. Here, $\theta$ denotes learnable model parameters.
\end{definition}

\noindent In particular, we say that a scoring-based architecture $f(q, d; \theta)$ \textbf{can solve a ranking task} if there exists a $\theta$ such that the scores solve the ranking task for all queries in $\mathcal{Q}$.

To simplify our analysis, we consider an idealized case of the ranking task defined as follows.
\begin{definition}[Complete Ranking Task]
    Given $\mathcal{Q}$ and $\mathcal{D}$, we call a ranking task \textbf{complete} if:
    \begin{itemize}[align=left,leftmargin=*,itemsep=0pt,topsep=0pt]
        \item $\forall q \in \mathcal{Q}$, all documents are relevant, i.e., $\mathcal{D}(q) = \mathcal{D}$
        \item $\forall$ ordering $\pi(\mathcal{D})$ of the documents $\mathcal{D}$, $\exists q \in \mathcal{Q}$ s.t. its ranked list of relevant documents $\mathscr{L}(q) \equiv \pi(\mathcal{D})$.
    \end{itemize}
\end{definition}
\noindent We proceed to show that the architectural capacity of ARR for ranking is not limited in the same way as that of a DE.

\subsection{Ranking via Dual Encoders}

DEs are a popular choice for information retrieval owing to their computational efficiency.
Unfortunately, DEs are usually insufficient due to their limited capacity compared to cross encoders (CEs), which are often critically employed in a subsequent reranking stage. 
Rigorous analysis of the capacity of DEs has been performed in the case of \emph{retrieval}, where the ranking of the relevant documents is not considered. 
In particular, \cite{guo2019breaking} shows that an embedding space dimension that grows linearly with the number of relevant documents is \emph{sufficient} for a DE to correctly separate relevant documents from irrelevant ones. 
While this result hints at the potential difficulties of DEs in handling many relevant documents, it remains unclear whether such a dimension is also \emph{necessary}.

In this section, we start with an analysis of a DE's capacity for the \emph{ranking} task. 
Our result shows that an embedding dimension that grows linearly with the number of relevant documents is \emph{necessary} for a DE to correctly rank them. 

In DEs, the query and document are embedded into the same Euclidean space, with the negated Euclidean distance as the similarity score. 
Specifically, a dual encoder generates a similarity score for each pair of $q \in \mathcal{Q}$ and $d \in \mathcal{D}$ as
\begin{equation}\label{eq:dual-encoder-ranking-architecture}
    f_\text{DE}(q, d; \theta) = -\|E_Q(q; \theta_\mathcal{Q}) - E_D(d; \theta_\mathcal{D})\|_2    
\end{equation}
where $E_Q: \mathcal{Q} \to \mathbb{R}^n$ and $E_D: \mathcal{D} \to \mathbb{R}^n$ are query and document encoders, respectively, and $n$ is the dimension of the embedding space.
Here, we assume that $E_Q$ and $E_D$ can have an arbitrary architecture. This means that they can be taken to be as expressive as needed when studying whether or not DEs are expressive enough to solve the ranking task. Our main result for DEs is the following.

\begin{prop}[DE Insufficiency for Complete Ranking]
Let $f_\text{DE}(q, d; \theta)$ be a dual encoder architecture defined in \Cref{eq:dual-encoder-ranking-architecture} with some embedding dimension $n$, and let $k := |\mathcal{D}|$. Then there does not exist a $\theta$ such that $f_\text{DE}(q, d; \theta)$ solves a complete ranking task when
    \begin{equation}\label{eq:dual-encoder-dimension-bound}
        n < \frac{\ln k!}{2 \ln k}, 
    \end{equation}
\end{prop}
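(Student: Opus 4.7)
My plan is to reduce the question to counting how many distinct orderings of $\mathcal{D}$ a query embedding in $\mathbb{R}^n$ can possibly induce. Fix arbitrary document embeddings $E_D(d_1), \ldots, E_D(d_k) \in \mathbb{R}^n$, and for each pair $(d_i, d_j)$ consider the equation $\|q - E_D(d_i)\|_2 = \|q - E_D(d_j)\|_2$. Expanding the squared norms, this equation becomes \emph{linear} in $q$, defining the affine hyperplane
\[
H_{ij} = \bigl\{\, q \in \mathbb{R}^n : 2 q \cdot (E_D(d_i) - E_D(d_j)) = \|E_D(d_i)\|_2^2 - \|E_D(d_j)\|_2^2 \,\bigr\},
\]
i.e.\ the perpendicular bisector of the two embeddings. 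The ordering produced by $q$ is determined entirely by which side of each of the $\binom{k}{2}$ bisectors $H_{ij}$ the query lies on, and is therefore constant on every open region of the arrangement $\mathcal{A} = \bigcup_{i<j} H_{ij}$.

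Consequently, the number of strict orderings realizable by varying $q$ is at most the number of open regions $r(\mathcal{A})$. I would then invoke the classical Schl\"afli--Zaslavsky bound: for any arrangement of $m$ affine hyperplanes in $\mathbb{R}^n$,
\[
r(\mathcal{A}) \;\le\; \sum_{i=0}^{n} \binom{m}{i} \;\le\; \binom{m+n}{n}.
\]
Setting $m = \binom{k}{2} \le k^2/2$ and absorbing low-order factors gives $r(\mathcal{A}) \le k^{2n}$ (the small-dimension edge cases $n \in \{0,1\}$ are checked directly).

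Since a complete ranking task requires all $k!$ permutations of $\mathcal{D}$ to be realized by some query, a dual encoder can only solve it if $k! \le r(\mathcal{A}) \le k^{2n}$. Taking logarithms yields $n \ge \frac{\ln k!}{2 \ln k}$; the contrapositive is exactly the claim. The main technical work is pushing $\sum_{i=0}^{n}\binom{k^2/2}{i}$ down to the clean expression $k^{2n}$ so that the constant ``$2$'' in the denominator of \Cref{eq:dual-encoder-dimension-bound} appears on the nose; the rest is essentially immediate from the linearity of the squared-distance difference. One subtlety worth flagging is that points lying on some $H_{ij}$ correspond to distance ties that $f_\text{DE}$ cannot break, so restricting to open regions (rather than closed cells) when counting realizable \emph{strict} orderings is rigorous rather than a convenience.
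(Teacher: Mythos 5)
Your proposal is correct and follows essentially the same route as the paper. The paper's proof is a one‑liner that cites Corollary~8 of Skala (2009) for the bound $N_{n,2}(k)\le k^{2n}$ on the number of distance permutations realizable by $k$ sites in $\mathbb{R}^n$, then observes that $k^{2n}<k!$ under \Cref{eq:dual-encoder-dimension-bound}. What you have done is unpack that citation: Skala's $L_2$ bound is proved exactly by the perpendicular‑bisector hyperplane arrangement and the Schl\"afli--Zaslavsky region count $\sum_{i=0}^n\binom{m}{i}$ with $m=\binom{k}{2}$, followed by the same algebraic simplification to $k^{2n}$. So this is a self‑contained re‑derivation rather than a different argument; the key reduction (a DE realizes at most as many strict rankings as there are open regions of the bisector arrangement, and a complete ranking task needs all $k!$) is identical. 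Two small things worth tightening if you wrote this out in full: (i) the statement is a pure nonexistence claim, so you should phrase it as a fixed-$\theta$ counting argument — for any fixed document embeddings the arrangement is fixed, and no choice of $E_Q$ can reach more than $r(\mathcal{A})$ orderings; and (ii) your simplification $\sum_{i=0}^n\binom{\binom{k}{2}}{i}\le k^{2n}$ does go through (e.g. $\sum_{i\le n}\binom{m}{i}\le 2m^n\le 2(k^2/2)^n\le k^{2n}$ for $n\ge 1$, with $n=0$ trivial), but since the inequality needs to be ``on the nose'' to recover the stated constant $2$ it deserves a line of its own rather than ``absorbing low-order factors.'' Your remark about open vs.\ closed regions handling distance ties is a genuine subtlety that the paper, by deferring to Skala, does not surface — nice catch.
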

\begin{proof}
    Let $N_{n, p}(k)$ denote the maximum number of distinct distance permutations generated by $k$ sites in $\mathbb{R}^n$ with the $L_p$ metric. From \citet{skala2009counting} Corollary 8, $N_{n, 2}(k) \le k^{2n}$. When Inequality (\ref{eq:dual-encoder-dimension-bound}) holds, we have $k^{2n} < k^!$, and thus $n$-dimensional embeddings are unable to capture all $k!$ permutations required in a complete ranking task. 
\end{proof}

Using Stirling's formula which states $\ln(k!) = k \ln k - k + O(\ln(k))$, the condition in \Cref{eq:dual-encoder-dimension-bound} becomes $n < \frac{k}{2} - \frac{k}{2\ln k} + O(1)$. Since $k / \ln k$ grows sublinearly in $k$, \textbf{this result states that the embedding dimension $n$ of the DE needs to grow linearly with the number of documents $k$}.

Note that the result holds regardless of the choice of the particular architecture for the encoders $E_Q$ and $E_D$. 
This means that even if the encoder architectures are expressive enough to realize any function, a DE will not be able to solve a complete ranking task if $n$ is not sufficiently large.

\subsection{Ranking via (Pointwise) Autoregressive Rankers}

We turn our attention to the ability of an ARR to rank documents using docIDs.\footnote{Proofs for this section can be found in \Cref{sec:proofs}.} As is standard, we assume the ARR uses token embeddings $\operatorname{E}\in \mathbb R^{|\mathcal{V}| \times n}$ and computes next-token probabilities over the vocabulary $\mathcal{V}$ as  $P(\cdot \mid c) = \sigma(\operatorname{E} \varphi(c))$ where $\varphi(c) \in \mathbb R^n$ is a hidden vector produced by the ARR given context $c$ and $\sigma$ is the softmax.

In the setting of document ranking, documents can be represented by unique identifiers, or docIDs, where each docID is a sequence of tokens from $\Vdocids \subseteq \mathcal{V}$. To ensure the model only generates valid docID tokens we can make use of constrained decoding, which restricts next-token probabilities to those tokens in $\Vdocids$ by computing
\[
P(v\mid c) = \begin{cases}
\sigma(\Edocids \varphi(c)) \quad & \text{if} \quad v \in \Vdocids,\\
0 \quad & \text{otherwise.}
\end{cases}
\]
where $\Edocids \in \mathbb{R}^{|\Vdocids| \times n}$ represents a matrix whose rows are the token embeddings associated with $\Vdocids$. To study the rank requirements of $\Edocids$, we introduce for convenience the matrix $\Eprime \coloneqq [\Edocids \mathbf{1}_{|\Vdocids|}] \in \mathbb{R}^{|\Vdocids| \times (n+1)}$, i.e., $\Edocids$ augmented with a column of ones.

We consider an idealized ``infinite-capacity'' setting where the hidden vector $\varphi (c)$ can be any vector in $\mathbb R^n$.

\begin{restatable}[Arbitrary Distributions over Tokens]{prop}{PropArbitraryTokenLevelProb}
\label{prop:arbitrary-tok-level-prob}
An infinite-capacity ARR can produce any (strictly positive) probability distribution over tokens in $\Vdocids$ via constrained decoding iff $\operatorname{rank}(\Eprime) = |\Vdocids|$.
\end{restatable}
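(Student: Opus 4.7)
The plan is to convert the claim into a linear-algebra condition on $\Eprime$ via two elementary facts about the softmax $\sigma : \mathbb{R}^m \to \mathbb{R}^m$, where $m \coloneqq |\Vdocids|$: (i) $\sigma$ surjects onto the strictly positive open simplex $\Delta^{m-1}_{>0}$; and (ii) $\sigma(z) = \sigma(z')$ iff $z - z' \in \operatorname{span}(\mathbf{1}_m)$, i.e.\ shifts are the \emph{only} kernel of $\sigma$ (both directions follow directly from $\log \sigma(z)_i = z_i - \log\sum_j e^{z_j}$).

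Under the infinite-capacity assumption, $\varphi$ sweeps over all of $\mathbb{R}^n$, so the set of achievable constrained-decoding distributions is exactly $\sigma(\operatorname{range}(\Edocids))$. The desideratum ``every strictly positive distribution is achievable'' then reads $\sigma(\operatorname{range}(\Edocids)) = \Delta_{>0}^{m-1} = \sigma(\mathbb{R}^m)$. Combining (i) and (ii), for any linear subspace $U \subseteq \mathbb{R}^m$ one has $\sigma(U) = \sigma(\mathbb{R}^m)$ iff $U + \operatorname{span}(\mathbf{1}_m) = \mathbb{R}^m$. Applied to $U = \operatorname{range}(\Edocids)$, this is exactly $\operatorname{range}(\Eprime) = \mathbb{R}^m$, i.e.\ $\operatorname{rank}(\Eprime) = m$.

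Concretely, for the $(\Leftarrow)$ direction I would note that $\operatorname{rank}(\Eprime) = m$ makes $\Eprime$ surjective onto $\mathbb{R}^m$, so the system $\Edocids \varphi - c\mathbf{1}_m = \log p$ is solvable for every positive $p$, and shift-invariance of $\sigma$ then gives $\sigma(\Edocids \varphi) = p$. For the $(\Rightarrow)$ direction I would pick an arbitrary positive $p$, use the hypothesis to find $\varphi$ with $\sigma(\Edocids \varphi) = p$, take logs to write $\Edocids \varphi - c\mathbf{1}_m = \log p$ for some scalar $c$, so that $\log p \in \operatorname{range}(\Eprime)$; the identity $z = \log \sigma(z) + \log\!\bigl(\sum_j e^{z_j}\bigr)\mathbf{1}_m$ then shows that ranging $p$ over $\Delta^{m-1}_{>0}$ makes $\{\log p\} + \operatorname{span}(\mathbf{1}_m)$ fill all of $\mathbb{R}^m$, forcing $\operatorname{range}(\Eprime) = \mathbb{R}^m$.

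I do not expect a substantive obstacle; the content is linear algebra once (i) and (ii) are in hand. The only care needed is bookkeeping: restricting to \emph{strictly positive} distributions aligns with the open image of softmax and makes the iff clean, and augmenting $\Edocids$ with the column $\mathbf{1}_m$ precisely absorbs the one-dimensional shift freedom in $\sigma$'s fibers — this is what the definition of $\Eprime$ is engineered for.
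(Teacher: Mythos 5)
Your proof is correct, and the backward direction coincides with the paper's (use surjectivity of $\Eprime$ onto $\mathbb{R}^{|\Vdocids|}$ plus shift-invariance of softmax). The forward direction, however, is a genuinely different and cleaner argument. The paper proves $(\Rightarrow)$ by contradiction: assuming $\operatorname{rank}(\Eprime) < |\Vdocids|$, it picks a nonzero $h$ orthogonal to $\operatorname{range}(\Eprime)$, splits $h$ into its positive-support set $\mathcal{P}$ and negative-support set $\mathcal{N}$, uses $\mathbf{1}\in\operatorname{range}(\Eprime)$ to equate the two mass sums, and then invokes the achievability hypothesis to construct a logit vector in which every $\mathcal{N}$-indexed coordinate strictly exceeds every $\mathcal{P}$-indexed one; a pair of inequalities then contradicts $h\perp\operatorname{range}(\Eprime)$. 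Your version avoids the separating construction entirely: from $\sigma(\Edocids\varphi)=p$ and the identity $z = \log\sigma(z) + \log\!\bigl(\sum_j e^{z_j}\bigr)\mathbf{1}$ you read off directly that $\log p\in\operatorname{range}(\Eprime)$, and since the sets $\{\log p\}+\operatorname{span}(\mathbf{1})$ cover $\mathbb{R}^{|\Vdocids|}$ as $p$ ranges over the open simplex, $\operatorname{range}(\Eprime)$ must be everything. This makes the role of the augmented column $\mathbf{1}$ transparent — it is precisely the quotient by softmax's fiber — and packages both directions under the single lemma ``$\sigma(U)=\sigma(\mathbb{R}^m)$ iff $U+\operatorname{span}(\mathbf{1})=\mathbb{R}^m$.'' The paper's contradiction argument buys one thing in exchange for the extra length: the same skeleton (construct a distribution or permutation separating $\mathcal{P}$ from $\mathcal{N}$, then contradict orthogonality) is reused nearly verbatim to prove the permutation analogue, \Cref{prop:arbitrary-tok-level-ranking}, whereas your surjectivity argument does not transfer to that statement, where the input is an ordering rather than a full distribution.
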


\noindent The contrapositive of the forward implication is commonly known as the ``softmax bottleneck'' \citep{yang2017breaking}.

\noindent We note that many existing language models come with ``special tokens'' which can be used as $\Vdocids$, and for any existing language model which does not come with pre-allocated ``special tokens'' we can expand the vocabulary with $n$ new tokens and extend its embedding matrix with the $I_{|\Vdocids|}$ identity matrix, which would satisfy the requirements. An implicit limitation of this is that $|\Vdocids| \le n$, however the chain-rule of probability implies the following.

\begin{corollary}[Arbitrary Distributions over Sequences]\label{cor:arbitrary-seq-level-prob}
An infinite-capacity ARR can produce any (strictly positive) probability distribution over sequences of tokens $\Vdocids$ via constrained decoding iff $\operatorname{rank}(\Eprime) = |\Vdocids|$.
\end{corollary}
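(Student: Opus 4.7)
The plan is to lift \Cref{prop:arbitrary-tok-level-prob} from the per-token statement to the sequence level by invoking the chain rule of probability, handling the two implications separately. Nothing new needs to be proved about the linear algebra of $\Eprime$; the single-token equivalence already does all of that work.

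For the forward direction ($\Leftarrow$), I would start from an arbitrary strictly positive target distribution $P^*$ over sequences of tokens in $\Vdocids$ and factor it via the chain rule as $P^*(v_1,\ldots,v_T) = \prod_{t=1}^T P^*(v_t \mid v_{<t})$. Strict positivity of $P^*$ implies that every conditional $P^*(\cdot \mid v_{<t})$ is itself a strictly positive distribution over $\Vdocids$. By \Cref{prop:arbitrary-tok-level-prob}, the assumption $\operatorname{rank}(\Eprime) = |\Vdocids|$ guarantees that for each prefix $v_{<t}$ there exists a vector $h_{v_{<t}} \in \mathbb R^n$ with $\sigma(\Edocids h_{v_{<t}}) = P^*(\cdot \mid v_{<t})$. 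In the infinite-capacity regime, where $\varphi(c)$ can take any value in $\mathbb R^n$ for each context $c$, setting $\varphi(q \oplus v_{<t}) := h_{v_{<t}}$ yields a constrained-decoding ARR whose probability of $(v_1,\ldots,v_T)$ equals the telescoped product of the matched conditionals, recovering $P^*$ exactly. For the reverse direction ($\Rightarrow$), I would argue by contrapositive: if $\operatorname{rank}(\Eprime) < |\Vdocids|$, then by \Cref{prop:arbitrary-tok-level-prob} there is a strictly positive $p$ on $\Vdocids$ that is not in the image of $\varphi \mapsto \sigma(\Edocids \varphi)$; since the first-token marginal of any constrained-decoding ARR sequence distribution is of exactly this form, no strictly positive joint whose first-token marginal equals $p$ can be produced.

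The only subtlety I anticipate is a bookkeeping one: in the forward direction, care is needed so that $P^*$ is supported on a prefix-closed set of valid token sequences, guaranteeing that every conditional $P^*(\cdot \mid v_{<t})$ is well-defined and strictly positive over all of $\Vdocids$. Standard refinements (restricting to docIDs lying in a prescribed prefix tree, or incorporating an end-of-sequence token into $\Vdocids$) can be handled by the same per-prefix invocation of \Cref{prop:arbitrary-tok-level-prob}. Beyond this, the corollary is essentially a ``free'' upgrade of the single-token statement through the chain rule, with no additional computation required.
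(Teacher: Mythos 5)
Your proposal is correct and matches the route the paper implicitly takes: the paper offers no separate proof of the corollary, merely noting before its statement that ``the chain-rule of probability implies the following,'' and your argument fills in exactly that chain-rule reduction to \Cref{prop:arbitrary-tok-level-prob} (realizing each conditional independently via the infinite-capacity assumption for $\Leftarrow$, and passing to the first-token marginal for $\Rightarrow$). Your observation about requiring prefix-closed support so that every conditional is a strictly-positive distribution over all of $\Vdocids$ is the right thing to flag; it is the only bookkeeping detail the paper's one-line justification glosses over.
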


Producing arbitrary probability distributions over sequences clearly entails the ability to produce arbitrary \textit{rankings} over sequences. One may wonder, however, whether the requirement of \Cref{prop:arbitrary-tok-level-prob} / \Cref{cor:arbitrary-seq-level-prob} is also \textit{necessary} for ranking. Surprisingly, at a single token position the requirement for ranking is the same as for distributions:

\begin{restatable}[Arbitrary Permutations over Tokens]{prop}{PropArbitraryTokenLevelRanking}
\label{prop:arbitrary-tok-level-ranking}
An infinite-capacity ARR can produce any permutation over tokens in $\Vdocids$ via constrained decoding iff $\operatorname{rank}(\Eprime) = |\Vdocids|$.
\end{restatable}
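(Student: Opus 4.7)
The plan is to prove both directions, with the backward direction ($\Leftarrow$) following immediately from \Cref{prop:arbitrary-tok-level-prob}: a full-rank $\Eprime$ permits any strictly positive distribution over $\Vdocids$, and choosing a distribution whose sorted order matches any target permutation realizes that permutation.

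For the forward direction ($\Rightarrow$), I would prove the contrapositive: if $\operatorname{rank}(\Eprime) < |\Vdocids|$, then some permutation is unreachable. Since softmax is strictly order-preserving and invariant to constant shifts of the logits, the set of realizable rankings is determined by the column span of $\Eprime$. The rank deficiency is equivalent to a non-trivial linear dependence among the rows of $\Eprime$: scalars $\{a_v\}_{v \in \Vdocids}$, not all zero, with $\sum_v a_v\, e_v = \mathbf{0}$ and $\sum_v a_v = 0$, where the second equation is contributed by the augmenting column of ones. Consequently every realizable logit vector $\ell = \Edocids \varphi(c)$ must satisfy the linear constraint $\sum_v a_v \ell_v = 0$.

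To exhibit a forbidden permutation, I partition $\Vdocids$ into $P = \{v : a_v > 0\}$, $N = \{v : a_v < 0\}$, and the zero set. Since $\sum_v a_v = 0$ and the $a_v$ are not all zero, both $P$ and $N$ are non-empty. Now consider any permutation that places every token of $P$ strictly below every token of $N$. Setting $S = \sum_{v \in P} a_v > 0$ and $m = \min_{w \in N} \ell_w$, I have $\ell_v < m$ for all $v \in P$ and $\ell_w \ge m$ for all $w \in N$, so $\sum_{v \in P} a_v \ell_v < S m$ while $\sum_{w \in N} (-a_w) \ell_w \ge S m$. Combining gives $\sum_v a_v \ell_v < 0$, contradicting the constraint.

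I expect the main obstacle to be resisting the tempting but incorrect intuition ``fewer realizable logits $\Rightarrow$ fewer realizable rankings'': a low-dimensional linear subspace of $\mathbb{R}^{|\Vdocids|}$ can in principle intersect many open chambers of a central hyperplane arrangement, so the argument cannot be purely dimensional. The key structural ingredient is the augmenting column of ones, which forces any row dependence to satisfy $\sum_v a_v = 0$ and thereby guarantees both a positive and a negative $a_v$; this sign pattern is precisely what rules out the concrete family of $P$-below-$N$ orderings used above.
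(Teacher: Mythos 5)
Your proof is correct and matches the paper's approach: the backward direction follows from the preceding proposition on arbitrary distributions exactly as the paper notes, and your forward direction—constructing a left null vector $a$ of $\Eprime$ (the paper's $h$), observing that the column of ones forces $\sum_v a_v = 0$ so that both sign classes are nonempty, and exhibiting a permutation placing the positive-$a_v$ tokens entirely below the negative-$a_v$ tokens—is precisely the argument the paper refers to as ``entirely analogous'' to its proof of the distributions proposition. You have simply spelled out in full what the paper compresses into a one-line reference, including the helpful remark that the result is not a naive dimension count but hinges on the sign structure enforced by the augmenting column of ones.
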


\noindent We observe that the capacity to produce arbitrary permutations over \textit{sequences} depends on the capacity to express all $|\Vdocids|!$ rankings at \textit{a single token position}. Since the requirement for this is the same as the requirement to represent arbitrary distributions at both token- and sequence-level, we conclude that the rank requirement expressed in \Cref{prop:arbitrary-tok-level-prob}, \Cref{cor:arbitrary-seq-level-prob}, and \Cref{prop:arbitrary-tok-level-ranking} is not only sufficient but necessary for the ranking task.

\textit{\textbf{Remark:} Is the theoretical requirement on $\Eprime$ satisfied by LLMs used in practice for reranking, and why are those LLMs successful even when this requirement clearly can not be satisfied? Our experiments on WordNet and ESCI (\Cref{sec:experimental-setup}) reflect two common scenarios. In WordNet, where docIDs comprise English nouns, the docID tokens span almost all of $|\mathcal{V}|$. Since the model dimension is much smaller, $\Eprime$ can not achieve a rank even close to $|\Vdocids|$.\footnote{Mistral-7B-v0.3-it has $|\mathcal{V}|=32,768$ and $d=4,096$.} However, most of the possible sequences are simply never encountered, and the model is tasked with learning a very small subset of all possible permutations. In ESCI, by contrast, we choose numerical docIDs, and the resulting $|\Vdocids|$ is therefore closer to the model dimension, allowing the rank requirement on $\Eprime$ to be matched more accurately.}

\textbf{At either the single token position or the sequence level, we see that the same requirement on the docID token embeddings} (namely, that $\operatorname{rank}(\Eprime) = |\Vdocids|$, i.e., slightly weaker than $\Edocids$ being full-rank) \textbf{is both necessary and sufficient to enable a fixed-dimension ARR to generate arbitrary probability distributions or arbitrary permutations for an unrestricted number of documents.} This ability is strictly superior to DEs, whose embedding dimension must grow linearly with the number of documents to achieve arbitrary rankings.

\section{Generalized Rank-Aware Training Loss}\label{sec:rank-aware-training-loss}

Given an example ``point'' of the form $(q, d_r, r)$ with the query, a docID, and the true rank, we propose a generalized loss function for autoregressive ranking which can be viewed as a weighted item-level loss, with the weight coefficient $\lambda(r)$ for a given docID being a function of the document's rank $r$. The item-level loss decomposes as a sum of cross entropy losses between the model predictions and target distributions \textbf{y} at each timestep (token position) of the docID under consideration. Note that the target distributions are a function of the timestep $t$ and potentially also of the current docID's rank $r$, hence $\mathbf{y}(r, t)$. Our generalized next-token prediciton loss for ranking, \textsc{SToICaL} (for ``\underline{S}imple \underline{To}ken-\underline{I}tem \underline{Ca}librated \underline{L}oss''), is given below. $\theta$ represents model parameters, $\mathcal{T}$ the tokenizer, and $\bar{q}$ the ``augmented query'' (cf. \Cref{sec:experimental-setup}):
\begin{equation}\label{eq:general-loss-formulation}
\mathcal{L}(q, d_r, r; \theta) = \lambda(r) \sum_{t=1}^{|\mathcal{T}(d_r)|} \biggl(\operatorname{CE}\Bigl(\mathbf{y}(r, t), p_\theta(\mathcal{T}(d_r)[t] \mid \mathcal{T}(\bar{q}), \mathcal{T}(d_r)_{<t})\Bigr) \biggr)
\end{equation}
\noindent Observe that if $\mathbf{y}(r,t)$ is set to a one-hot vector, the summation across a docID's timesteps simplifies to the negative log-likelihood of that docID. More generally, we can view our loss as a weighted sum of item-level log-likelihoods, for some reweighting function $\lambda(r)$, and for some customizable construction $\mathbf{y}(r,t)$ of target distributions per timestep.

The most common target distribution $\mathbf{y}(r, t)$ is one-hot:
\begin{equation}
\mathbf{y}(t)_i \coloneqq \begin{cases}
    1 & \text{if } v_t = i \\
    0 & \text{otherwise}
\end{cases}
\end{equation}
\noindent \Cref{sec:prefix-tree} presents a rank-aware improvement to $\mathbf{y}(r,t)$.

\subsection{Rank-Aware Item-Level Reweighting}\label{sec:rank-aware-item-level-reweighting}

A simple improvement to next-token prediction is to reweight the loss contribution of $d_r, r \in [1, n_q]$ by setting $\lambda(r)$ to a function that decreases as rank $r$ increases (rank 1 represents the most relevant document). We consider two simple variants: a \textit{fractional} reweighting function $\lambda(r) = \frac{1}{r^\alpha}$ where $\alpha$ is a temperature hyperparameter, or a \textit{stepwise} reweighting function $\lambda(r) = \frac{n_q - r + 1}{n_q}$ where $n_q$ is the total number of documents for the query. If we decide to omit some documents from the training (for example, train a top-1 retrieval model only) we can still express this by setting our reweighting function to the \textit{indicator} function ($\lambda(r) = \mathbb{I}_{r=1}$ in the top-1 case).

\subsection{Prefix Tree for Rank-Aware Token-Level $\mathbf{y}(r,t)$}\label{sec:prefix-tree}

A prefix tree (aka trie) groups strings according to their shared prefixes for quickly determining (by traversing left-to-right, i.e., from the root down) whether a given sequence belongs to the valid set of sequences or not. Building a prefix tree over the list of tokenized docIDs allows us to examine, at each timestep, which continuation docIDs are permissible given the partially-generated docID, and to distribute mass accordingly among valid tokens in the target distribution.

Given tokenized target docIDs in the following toy example:
\begin{itemize}[noitemsep, topsep=0pt] 
    \item $d_1$ : ``\texttt{dog}'' $\Rightarrow$ ([\texttt{d}, \texttt{og}])
    \item $d_2$ : ``\texttt{cat}'' $\Rightarrow$ ([\texttt{c}, \texttt{at}])
    \item $d_3$ : ``\texttt{cats}'' $\Rightarrow$ ([\texttt{c}, \texttt{at}, \texttt{s}])
    \item $d_4$ : ``\texttt{deer}'' $\Rightarrow$ ([\texttt{d}, \texttt{eer}])
    \item $d_5$ : ``\texttt{fish}'' $\Rightarrow$ ([\texttt{f}, \texttt{ish}])
\end{itemize}
\vspace{-0.5em}
\noindent we can construct a prefix tree over their tokens, as shown in \Cref{fig:example}(a). The leaf nodes and potentially some nonterminal nodes of the trie, which we term \textit{docID nodes} and represent with solid borders in \Cref{fig:example}(a), will each represent the tokenization of a docID $d_r$, and to such nodes we assign the score $\mu(r) = \frac{1}{r^\beta}$ for temperature hyperparameter $\beta$. To score any given node in the trie (e.g., for a partially-generated docID), we can marginalize over the scores of all docIDs which that node governs.\footnote{At marginalization, the fractional $\mu(r)$ prevents a continuation which leads to the top-$r$ docID from being outweighed by a continuation leading to multiple lower-ranked docIDs. As $\beta \rightarrow \infty$, the resulting distributions converge to one-hots.}

\begin{figure}[h] %
    \centering %
    \includegraphics[width=\linewidth]{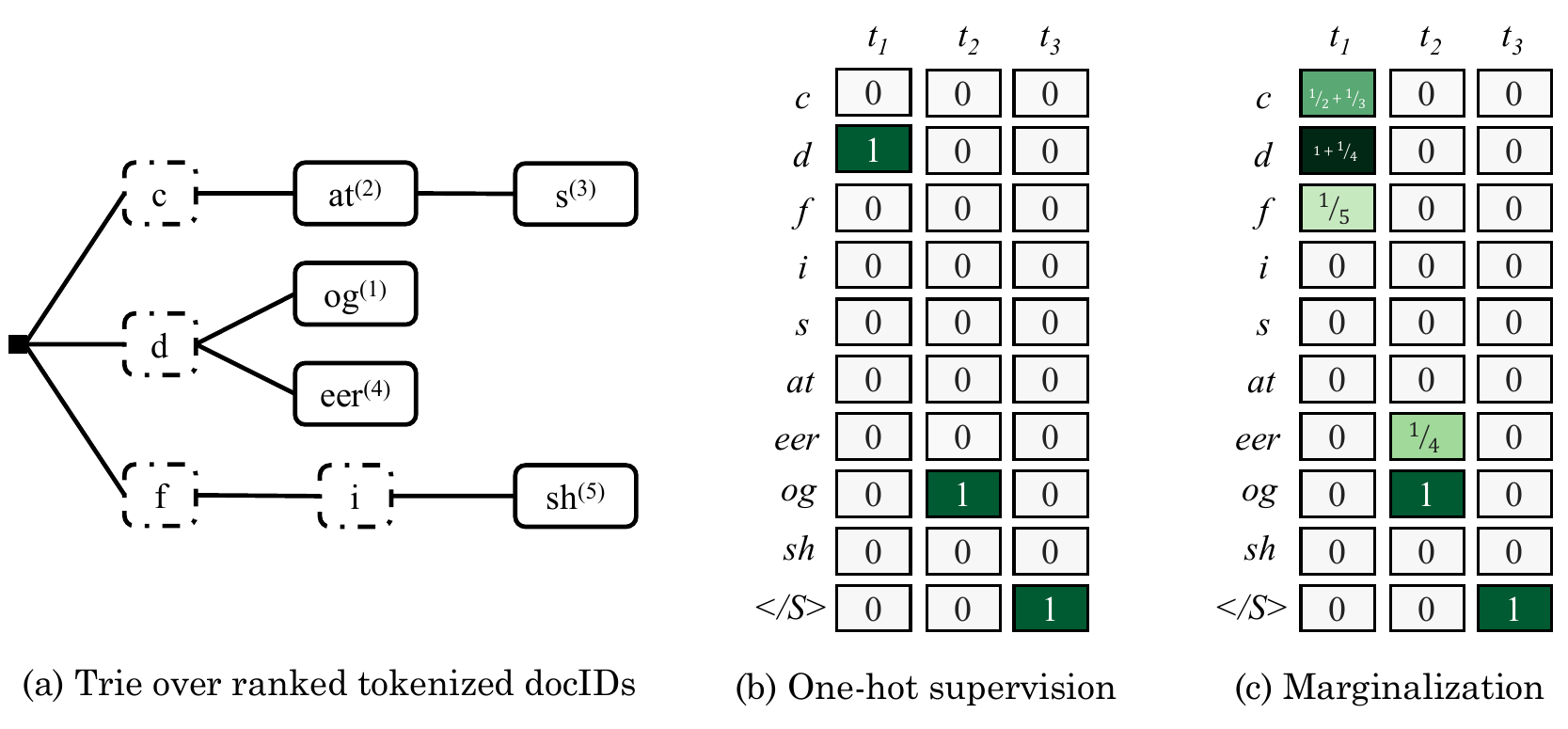}
    \caption{Illustration of (a) a prefix-tree; target distributions derived from it via (b) one-hot supervision versus (c) marginalization.}
    \label{fig:example}
\end{figure}

Given an empty string ($\varepsilon$) at timestep $t_1$, the trie tells us that permissible continuation tokens for [$\varepsilon$] are $\{\texttt{c, d, f}\}$. At timestep $t_2$, if our highest-ranked document is $d_1 = \texttt{``dog''}$, we will have teacher-forced token \texttt{d} as the $t_2$ input to the model, and the prefix tree will consequently yield tokens $\{\texttt{og, eer}\}$ as valid continuations for [$\varepsilon$, \texttt{d}]. We illustrate the (pre-normalized) target distributions produced by this trie-based marginalization process in \Cref{fig:example}(c).\footnote{Technically, for timesteps $t>1$, teacher forcing causes a mismatch between the rank-aware training described here and the model's actual decision-making at inference time. For example, if the model chooses \texttt{``c''} at timestep \textit{t}=1 during inference, we will not have smoothed over \texttt{``at''} at $t=2$ during training to encourage the second-best \texttt{``cat''} to be generated (without constrained decoding, we might well generate \texttt{``c''} $+$ \texttt{``og''} $\rightarrow$ \texttt{``cog''} in such a scenario). Nevertheless, akin to \citet{reddy2024first}, our method does encourage full rank-awareness at $t=1$. We leave the question of mitigating this rank-aware training-inference mismatch for $t>1$ to future work.}

\section{Experiments}

\vspace{0.1cm}
\subsection{Datasets, Experimental Settings, and Metrics}\label{sec:experimental-setup}

\paragraph{WordNet}\label{sec:wordnet-dataset}

WordNet \citep{fellbaum1998wordnet} includes a hierarchy over noun concepts, aka ``synsets'', where the top-level node is \texttt{entity.n.01} (the ``\texttt{n}'' suffix means \textit{noun}, and ``\texttt{01}'' that this is the first of multiple possible word senses). To obtain a query, we randomly sample a noun synset at depth one or greater (i.e., excluding the root \texttt{entity.n.01} itself). The relevant documents for the sampled query synset are its hypernym synsets, meaning all the more abstract synsets on the path from the query up to the root. For each query, we additionally sample a negative document not on the hypernym path to be scored alongside the relevant documents. For example (we omit the suffixes for brevity):
\begin{itemize}[align=left,leftmargin=*,itemsep=0pt,topsep=0pt]
    \item \textbf{Query:} deer
    \item \textbf{Ranked docIDs:} \text{ruminant} $\succ$ \text{even-toed ungulate} $\succ$ \text{ungulate} $\succ$ \text{placental} $\succ$ \text{mammal} $\succ$ \text{vertebrate} $\succ$ \text{chordate} $\succ$ \text{animal} $\succ$ \text{organism} $\succ$ \text{living thing} $\succ$ \text{whole} $\succ$ \text{object} $\succ$ \text{physical entity} \textcolor{gray}{[$\succ$ \text{entity}]}
    \item \textbf{Negative docID:} \text{solarization}
\end{itemize}
\noindent We provide a zero-shot prompt for WordNet in \Cref{sec:wn-prompt}.

\vspace{-0.6em}
\paragraph{ESCI Shopping Queries}\label{sec:esci-dataset}

We process the \textit{Shopping Queries} dataset of \cite{reddy2022shopping} as follows to obtain a reranking dataset. We use the Gecko \citep{lee2024gecko} embeddings (specifically from the \texttt{gecko-1b-en} model) of the queries and product titles to obtain a matrix of inner products between all queries and all product titles. To aggregate the relevant product titles for a query, we take the $1^\text{st}, 100^\text{th}, 200^\text{th}, \ldots, 10000^\text{th}$ product titles that have largest dot product with the query, in descending order. We choose step size 100 to ensure that the difference in relevance between the $n^\text{th}$ and the $(n+1)^\text{st}$ product title is meaningful. Finally, we filter out all non-English queries and product titles, and preserve only those examples that have a non-empty set of relevant English product titles.

Unlike the WordNet dataset (\Cref{sec:wordnet-dataset}), the documents of ESCI are (\textit{a}) substantially longer, i.e., on the order of dozens of tokens instead of ten, and (\textit{b}) and have not been sampled from a taxonomy. Issue (\textit{a}) motivates the creation of docIDs that are both short enough to generate as pointwise targets, and structured enough so that a prefix tree can be built over them for target training distributions as discussed in \Cref{sec:prefix-tree}. We create such docIDs via sparse dictionary learning over the Gecko product title embeddings to obtain sparse vectors, each of dimension 100 with 3 nonzero values.\footnote{We use Orthogonal Matching Pursuit from \texttt{scikit-learn} \citep{pedregosa2011scikit}, \texttt{n\_components=100} and \texttt{transform\_n\_nonzero\_coefs=3}.} The docID for a given product title is then taken to be the concatenation of the indices of the nonzero entries in its corresponding sparse vector, sorted in descending order by the absolute values of the entries (e.g., ``\texttt{25,36,39}''). A zero-shot prompt for ESCI is provided in \Cref{sec:esci-prompt}. 

\vspace{-0.6em}
\paragraph{Task Setup for Autoregressive Ranking}\label{sec:task-definition}

The in-context input to our Causal Transformer is a query $q$ and an unordered docID set $\operatorname{shuffle}(\{d_1, \ldots, d_{n_q}\})$; assume the ground truth ranking is $d_1 \succ \ldots \succ d_{n_q}$. We construct the prompt, denoted by $\bar{q}$ for the query augmented with the corpus and system prompt, as $\bar{q} \coloneqq \operatorname{prompt}(q, \operatorname{shuffle}(\{d_1, \ldots, d_{n_q}\}))$, cf. \Cref{sec:appendix-prompt}. We emphasize that this augmented prompt construction reflects an \textit{in-context} approach to ranking \citep{gupta2025scalable} in which the full corpus is provided as context — while this contrasts with DEs' and CEs' pairwise scoring approach, it is natural to ARR, which aims to exploit the in-context copying and reasoning abilities of decoder LLMs. For our evaluation, as a simple proxy to beam search, we perform greedy decoding for $\max_{i \in [1, n_q]} |\mathcal{T}(d_i)|$ steps,  where $\mathcal{T}$ is the tokenizer, and score any document by averaging the log-probabilities of its constituent tokens, i.e., $\operatorname{score}(d_j) = \frac{1}{|\mathcal{T}(d_j)|}\sum_{k=1}^{|\mathcal{T}(d_j)|}\log p_\theta (\mathcal{T}(d_j)[k] \mid \mathcal{T}(\bar{q}), \mathcal{T}(d_j)_{<k})$.\footnote{We emphasize that in practice, a $\operatorname{top}-k$ ranking can still be directly generated via beam search using beam size $k$.} We use Mistral-7B-v0.3-it \citep{jiang2024identifying} for all fine-tuning experiments in this paper, with settings for the generative approach in \Cref{sec:hparams-generative}.

\paragraph{Metrics}\label{sec:metrics}
We adopt the following metrics – the first one a heuristic to assess the constraint satisfaction of our approach, the other two commonly-used metrics for ranking:

\begin{itemize}[align=left,leftmargin=*,itemsep=0pt,topsep=0pt]
    \item For each example of the WordNet dataset, we use the sampled negative document to assess the \textbf{constraint violation rate (CVR)}, a heuristic measurement we define as follows. Given $n$ positive documents $d_1^{+}, \ldots, d_n^{+}$ and 1 negative (irrelevant) document $d^{-}$, we say that the model has incurred a violation if the log-probability the model assigns to any of the $d^{+}_i$ is lower than the log-probability it assigns to $d^{-}$. 
    \item We measure the \textbf{normalized discounted cumulative gain (nDCG)} between ``gold'' scores and predicted log-probabilities (including for the sampled negative document). The ``gold'' are in log-scale in decreasing order of relevance: $\operatorname{score}(d_1^{+})=\log(n_q+1), \operatorname{score}(d_2^{+})=\log(n_q) \ldots, \operatorname{score}(d_{n_{q}-1}^{+})=\log(3), \operatorname{score}(d_{n_q}^{+})=\log(2), \operatorname{score}(d_{q}^{-})=\log(1)=0$.
    \item We use the log-probabilities to $\operatorname{argsort}$ over the docIDs, and use the gold ranking $\pi$ and predicted ranking $\hat\pi$ to measure \textbf{recall-at-k (R@k)} as $\frac{|\pi[:k] \cap \hat\pi[:k]|}{k}$.
\end{itemize}

\begin{table}[ht] 
\centering

\resizebox{\textwidth}{!}{%
    \setlength{\tabcolsep}{5pt}
    \renewcommand{\arraystretch}{0.95}
    \begin{tabular}{c c ccccccc}
    \toprule
    & & CVR ($\downarrow$) & nDCG ($\uparrow$) & R@1 ($\uparrow$) & R@2 ($\uparrow$) & R@3 ($\uparrow$) & R@4 ($\uparrow$) & R@5 ($\uparrow$) \\
    \midrule
    \multicolumn{2}{c}{\parbox{3.5cm}{\centering NTP: $\lambda(r) = \mathbb{I}_{r=1}$\\$\mathbf{y}$ one-hot}} & 27.66\% & 94.89 & \textbf{99.96} & 62.43 & 55.3 & 55.16 & 63.42 \\
    \midrule
    \multirow{5}{*}{\parbox{3.2cm}{\centering $\lambda(r) = \frac{1}{r^\alpha}$\\$\mathbf{y}$ one-hot}}
    & $\alpha=1$ & 0.0\% & 99.6 & 91.1 & 87.68 & 88.81 & 92.13 & 93.94 \\
    & $\alpha=2$ & 0.0\% & \textbf{99.83} & 97.74 & 95.69 & 95.48 & \textbf{96.89} & \textbf{96.58} \\
    & $\alpha=3$ & 0.0\% & 99.81 & 99.22 & 97.35 & \textbf{96.16} & 96.07 & 95.53 \\
    & $\alpha=4$ & 0.02\% & 99.78 & 99.36 & \textbf{97.46} & 95.63 & 96.03 & 95.31 \\
    & $\alpha=5$ & 0.04\% & 99.67 & 99.6 & 97.14 & 94.55 & 95.71 & 93.52 \\
    \midrule
    \multicolumn{2}{c}{\parbox{3.5cm}{\centering $\lambda(r) = \frac{n_q - r + 1}{n_q}$\\$\mathbf{y}$ one-hot}} & 0.0\% & 99.6 & 51.62 & 69.88 & 82.09 & 89.6 & 93.34 \\
    \midrule
    \multirow{5}{*}{\parbox{3.2cm}{\centering $\lambda(r) = \mathbb{I}_{r=1}$ \\ $\mathbf{y}$ marg. over trie w/ leaf scores $\frac{1}{r^\beta}$}}
    & $\beta=1$ & 1.48\% & 96.54 & 98.1 & 72.93 & 63.06 & 59.64 & 64.25 \\
    & $\beta=2$ & 1.38\% & 96.54 & 99.3 & 67.31 & 57.51 & 56.98 & 62.45 \\
    & $\beta=3$ & 2.04\% & 96.34 & 99.44 & 67.58 & 59.95 & 58.58 & 63.17 \\
    & $\beta=4$ & 4.54\% & 96.36 & 99.58 & 68.53 & 62.25 & 62.12 & 67.41 \\
    & $\beta=5$ & 13.04\% & 96.47 & 99.78 & 70.84 & 65.12 & 64.81 & 70.43 \\
    \bottomrule
    \end{tabular}%
}
\caption{WordNet results (evaluated over 5000 (query, targets)-examples)}
\label{tab:wordnet-results}

\vspace{0.5cm} 

\resizebox{\textwidth}{!}{%
    \setlength{\tabcolsep}{5pt}
    \renewcommand{\arraystretch}{0.95}
    \begin{tabular}{c c ccccccc}
    \toprule
    & & nDCG ($\uparrow$) & R@1 ($\uparrow$) & R@2 ($\uparrow$) & R@5 ($\uparrow$) & R@10 ($\uparrow$) & R@25 ($\uparrow$) & R@50 ($\uparrow$) \\
    \midrule
    \multicolumn{2}{c}{\parbox{3.5cm}{\centering NTP: $\lambda(r) = \mathbb{I}_{r=1}$\\$\mathbf{y}$ one-hot}} & 95.23 & \textbf{95.16} & 52.58 & 27.64 & 23.51 & 37.98 & 62.99 \\
    \midrule
    \multirow{5}{*}{\parbox{3.5cm}{\centering $\lambda(r) = \mathbb{I}_{r=1}$ \\$\mathbf{y}$ marg. over trie w/ leaf scores $\frac{1}{r^\beta}$}}
    & $\beta = 1$ & \textbf{97.21} & 70.0  & 56.61 & 45.57 & 46.27 & 54.78 & \textbf{69.58} \\
    & $\beta = 2$ & \textbf{97.21} & 70.32 & \textbf{58.06} & \textbf{51.07} & \textbf{48.08} & \textbf{54.96} & 69.03 \\
    & $\beta = 3$ & 97.11 & 68.71 & 56.13 & 48.03 & 45.78 & 52.31 & 67.63 \\
    & $\beta = 4$ & 96.96 & 69.03 & 54.84 & 44.6  & 43.9  & 50.29 & 67.7  \\
    & $\beta = 5$ & 96.89 & 66.77 & 56.13 & 44.21 & 42.4  & 50.72 & 67.61 \\
    \bottomrule
    \end{tabular}%
}
\caption{ESCI results (evaluated over 310 (query, targets)-examples)}
\label{tab:esci-results}

\end{table}

\subsection{Main Results}

In Tables \ref{tab:wordnet-results} and \ref{tab:esci-results} we present our experimental results on the WordNet and ESCI datasets, respectively. Each of our \textbf{ranking-aware loss functions can be seen to drastically reduce the constraint violation rate} (cf. \Cref{tab:wordnet-results}) — i.e., the model learns to score irrelevant docIDs lower than relevant docIDs for a query. Such a shift in probability mass from invalid to valid docIDs makes the resulting LLM more aligned, as it were ``by construction'', with the constraints of generating valid docIDs. Such realignment would naturally ease the computational burden of a debiasing algorithm for constrained decoding, e.g., as described in \citet{ye2025efficient}.

Fixing our \textbf{y} to one-hots, we observe for WordNet in \Cref{tab:wordnet-results} that \textit{fractional} ($\lambda(r) = \frac{1}{r^\alpha}$) item-level reweighting is more effective than \textit{stepwise} ($\lambda(r) = \frac{n_q - r + 1}{n_q}$), while in the limit $\alpha \rightarrow \infty$ for fractional reweighting, the target distributions look more and more like a top-1 training curriculum $\frac{1}{r^\alpha} \rightarrow \mathbb{I}_{r=1}$ (cf. \Cref{sec:rank-aware-item-level-reweighting}). We separately explore the effect of rank-aware target distributions $\mathbf{y}(r,t)$ (cf. \Cref{sec:prefix-tree}), with a single rank-aware target per query (rather than per document $d_r$). We similarly observe decreased constraint violation rate, improved nDCG and R@K for $K > 2$, but not as dramatic as for the item-level reweighting loss.

The ESCI dataset contains a much higher number of product titles to be reranked per query on average than the WordNet dataset (cf. \Cref{sec:dataset-statistics}). Training with item-level losses until convergence would treat each docID $d_r$ for $r \in [1, n_q]$ as a separate target, and would thus require a proportionally larger number of training steps to iterate over the entire dataset. Hence, we run our ESCI experiments under the same budget constraint as next-token prediction, providing a single target per every query by setting $\lambda(r) = \mathbb{I}_{r=1}$, and inject ranking information via trie-based marginalization. While hurting R@1 performance, \Cref{tab:esci-results} improves R@K for all $K>1$ as well as the aggregate nDCG score. \textbf{Trie-based marginalization can thus be viewed as an economical alternative to the item-level reweighting}, even if it performs second to item-level reweighting under the assumption of unlimited training examples. Additionally, the improved ESCI ranking scores highlight the \textbf{synergy between this token-level loss and the ESCI docIDs, which we specifically learned to be trie-friendly}. We hope this result stimulates future research into optimal combinations of docID learning algorithms with token-based ranking losses.

\subsection{Comparison with DEs and CEs}\label{sec:comparison-with-de-and-ce}
\vspace{0.2cm}

We compare (pointwise) autoregressive ranking with DEs and CEs. We conduct our experiment on WordNet, and report the results in \Cref{fig:wordnet-de-ce}(a). It can be seen that autoregressive ranking is on par with CEs while being significantly more powerful than DEs. Below we provide details on our choice of DEs and CEs, and describe the hyperparameter settings in \Cref{sec:hparams-de} and \Cref{sec:hparams-ce}, respectively.

Since the number of noun synsets in WordNet is relatively small (i.e., 82,155), we simply use an embedding table of size 82,155$\times n$ to map each synset to an $n$-dimensional embedding (as opposed to using a tokenizer). 
Denote this embedding lookup function as $\operatorname{E}_\phi(\cdot)$, where $\phi$ is the trainable embedding table. 
Given a pair $(q, d)$, we use DEs with the inner product $\langle \operatorname{E}_\phi(q), \operatorname{E}_\phi(d) \rangle$ as their similarity score. In other words, we use lookup table DEs which is a common choice on this dataset \citep{dhingra2018embedding,parmar2022hyperbox,an2023coarse}.
Given a batch of positive data $\{(q_i, d_i, r_i)\}_{i=1}^B$, we train the DE by minimizing the following \textbf{weighted} batch softmax loss:
\begin{equation}
    \mathcal{L}_\text{DE}(\{q_i, d_i, r_i\}_{i=1}^B; \phi) = -\frac{1}{B} \sum_{i=1}^B 
    \left(
    \lambda(r_i) \cdot \log
    \frac{
        \exp{
            \Big(
                \langle \operatorname{E}_\phi(q_i), \operatorname{E}_\phi(d_i) \rangle / \tau
            \Big)
        }
    }
    {
        \sum_{j=1}^B 
        \exp{
            \Big(
                \langle \operatorname{E}_\phi(q_i), \operatorname{E}_\phi(d_j) \rangle / \tau
            \Big)
        }
    }
    \right).
\end{equation}
where $\tau$ is a scalar that we fix to be 0.05. Note that this formulation incorporates the reweighting factor $\lambda(r)$.
If $\lambda(r) = 1$ then the loss reduces to the standard batch softmax loss commonly used in information retrieval \citep{wu2024effectiveness}. 
Here, we follow the same idea as in \eqref{eq:general-loss-formulation} and use $\lambda(r)=\frac{1}{r^\alpha}$. 
We report the results for varying $\alpha$ in \Cref{fig:wordnet-de-ce}(b), which shows that an increasing $\alpha$ improves Recall$@k$, particularly for small $k$. 
This result shows that our reweighting factor can be useful broadly for ranking problems, beyond the particular autoregressive ranking approach. 

For CE, we concatenate the embeddings into a $2n$-dimensional vector which is passed to an MLP $f_\text{MLP}(\operatorname{concat}(\operatorname{E}_\phi(q), \operatorname{E}_\phi(d)); \psi): \mathbb{R}^{2n} \mapsto \mathbb{R}$, where $\psi$ is trainable parameters. We use ReLU activation function, three hidden activations of size $2n$, and a final output of dimension $1$ as the similarity. To train a CE, one cannot use the same batch softmax loss as for the DEs since scoring each query against all in-batch negatives is costly. Here, for a batch of positive data $\mathcal{P} = \{(q_i, d_i, r_i)\}_{i=1}^B$ we create a negative set of the same size given by $\mathcal{N} = \{(q_i, d_{\rho(i)})\}_{i=1}^B$, where $\rho: \{1, \ldots, B\} \mapsto \{1, \ldots, B\}$ is a is a randomly-chosen permutation.  Then we minimize:
\begin{equation}
\begin{split}
    \mathcal{L}_\text{CE}(\mathcal{P}, \mathcal{N}; \phi, \psi) = - \sum_{i=1}^B 
    \Bigg(
    & \frac{\lambda(r_i)}{\sum_{j=1}^B \lambda(r_j)} \log \sigma
    \Big( 
    f_\text{MLP}(\operatorname{concat}(\operatorname{E}_\phi(q_i), \operatorname{E}_\phi(d_i)); \psi)
    \Big)
    \\
    & +
    \frac{1}{B}
    \log \sigma
    \Big(
    -f_\text{MLP}(\operatorname{concat}(\operatorname{E}_\phi(q_i), \operatorname{E}_\phi(d_{\rho(i)})); \psi)
    \Big)
    \Bigg),
\end{split}
\end{equation}
\noindent with $\sigma(\cdot)$ the sigmoid function. 
A reweighting term $\lambda(r_i)$ is again added to positive pairs (effect illustrated in \Cref{fig:wordnet-de-ce}(c)).

\begin{figure}[h] %
    \centering %
    \includegraphics[width=\linewidth]{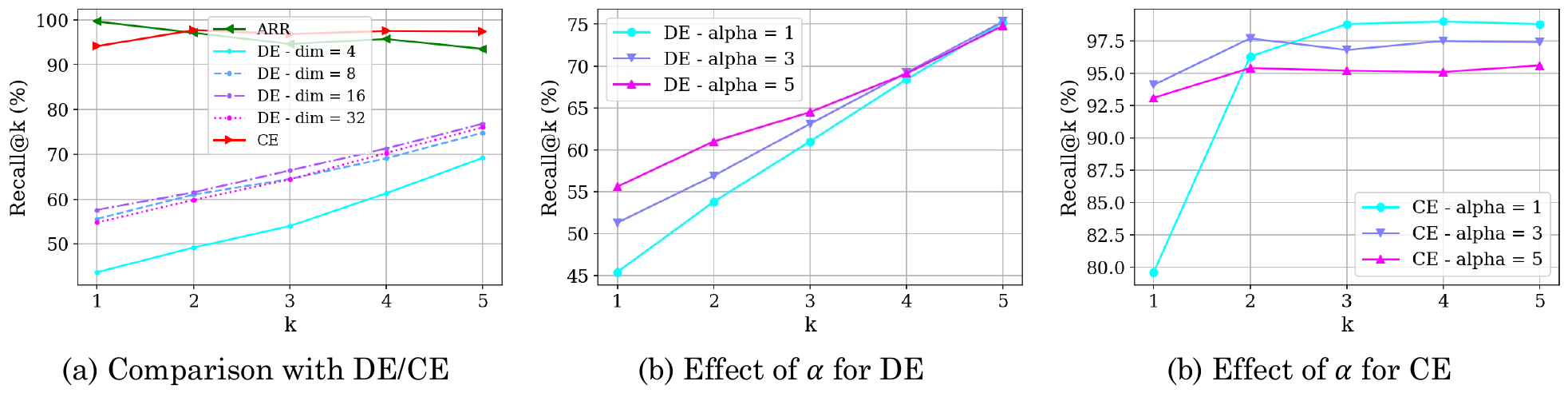}
    \caption{Results on WordNet with Dual Encoders (DEs) and Cross Encoders (CEs).
    (a) Comparison of autoregressive ranking with DEs ($n \in \{4,8,16,32\}, \alpha=5$) and CE ($n=32$). 
    (b, c) Effect of $\alpha$ in the reweighting function for DEs ($n=8$) and CEs ($n=32$), respectively.}
    \label{fig:wordnet-de-ce}
\end{figure}

\section{Conclusion}\label{sec:conclusion}
\vspace{0.1cm}
In this paper, we established a theoretical foundation for Autoregressive Ranking, proving that while DEs require embedding dimensions to grow with corpus size, ARR models generating multi-token docIDs can solve complete ranking tasks with a constant hidden dimension, given a mild condition on the rank of the embedding matrix for the docID tokens. We proposed a generalized rank-aware training loss for (pointwise) autoregressive ranking that relies on item-level reweighting and prefix-tree marginalization to distribute probability mass over valid
docID tokens based on their ground-truth relevance. Experiments on WordNet and ESCI show this approach successfully suppresses invalid docID generations and improves on key ranking metrics.

\section{Acknowledgements}\label{sec:acknowledgements}

The authors would like to thank Aditya K Menon for valuable discussions about this paper and generative information retrieval more broadly.

\bibliographystyle{abbrvnat}
\bibliography{references}

@article{reddy2022shopping,
  title={Shopping queries dataset: A large-scale ESCI benchmark for improving product search},
  author={Reddy, Chandan K and M{\`a}rquez, Llu{\'\i}s and Valero, Fran and Rao, Nikhil and Zaragoza, Hugo and Bandyopadhyay, Sambaran and Biswas, Arnab and Xing, Anlu and Subbian, Karthik},
  journal={arXiv preprint arXiv:2206.06588},
  year={2022}
}

@article{lee2024gecko,
  title={Gecko: Versatile text embeddings distilled from large language models},
  author={Lee, Jinhyuk and Dai, Zhuyun and Ren, Xiaoqi and Chen, Blair and Cer, Daniel and Cole, Jeremy R and Hui, Kai and Boratko, Michael and Kapadia, Rajvi and Ding, Wen and others},
  journal={arXiv preprint arXiv:2403.20327},
  year={2024}
}

@book{fellbaum1998wordnet,
  title={WordNet: An electronic lexical database},
  author={Fellbaum, Christiane},
  year={1998},
  publisher={MIT press}
}

@mastersthesis{jiang2024identifying,
  title={Identifying and mitigating vulnerabilities in llm-integrated applications},
  author={Jiang, Fengqing},
  year={2024},
  school={University of Washington}
}

@inproceedings{shazeer2018adafactor,
  title={Adafactor: Adaptive learning rates with sublinear memory cost},
  author={Shazeer, Noam and Stern, Mitchell},
  booktitle={International Conference on Machine Learning},
  pages={4596--4604},
  year={2018},
  organization={PMLR}
}

@article{pedregosa2011scikit,
  title={Scikit-learn: Machine learning in Python},
  author={Pedregosa, Fabian and Varoquaux, Ga{\"e}l and Gramfort, Alexandre and Michel, Vincent and Thirion, Bertrand and Grisel, Olivier and Blondel, Mathieu and Prettenhofer, Peter and Weiss, Ron and Dubourg, Vincent and others},
  journal={the Journal of machine Learning research},
  volume={12},
  pages={2825--2830},
  year={2011},
  publisher={JMLR. org}
}

@article{lee2024can,
  title={Can Long-Context Language Models Subsume Retrieval, RAG, SQL, and More?},
  author={Lee, Jinhyuk and Chen, Anthony and Dai, Zhuyun and Dua, Dheeru and Sachan, Devendra Singh and Boratko, Michael and Luan, Yi and Arnold, S{\'e}bastien MR and Perot, Vincent and Dalmia, Siddharth and others},
  journal={arXiv preprint arXiv:2406.13121},
  year={2024}
}

@article{li2023multiview,
  title={Multiview identifiers enhanced generative retrieval},
  author={Li, Yongqi and Yang, Nan and Wang, Liang and Wei, Furu and Li, Wenjie},
  journal={arXiv preprint arXiv:2305.16675},
  year={2023}
}

@inproceedings{li2024learning,
  title={Learning to rank in generative retrieval},
  author={Li, Yongqi and Yang, Nan and Wang, Liang and Wei, Furu and Li, Wenjie},
  booktitle={Proceedings of the AAAI Conference on Artificial Intelligence},
  volume={38},
  number={8},
  pages={8716--8723},
  year={2024}
}

@article{skala2009counting,
  title={Counting distance permutations},
  author={Skala, Matthew},
  journal={Journal of Discrete Algorithms},
  volume={7},
  number={1},
  pages={49--61},
  year={2009},
  publisher={Elsevier}
}

@article{luo2024recranker,
  title={Recranker: Instruction tuning large language model as ranker for top-k recommendation},
  author={Luo, Sichun and He, Bowei and Zhao, Haohan and Shao, Wei and Qi, Yanlin and Huang, Yinya and Zhou, Aojun and Yao, Yuxuan and Li, Zongpeng and Xiao, Yuanzhang and others},
  journal={ACM Transactions on Information Systems},
  year={2024},
  publisher={ACM New York, NY}
}

@article{tay2022transformer,
  title={Transformer memory as a differentiable search index},
  author={Tay, Yi and Tran, Vinh and Dehghani, Mostafa and Ni, Jianmo and Bahri, Dara and Mehta, Harsh and Qin, Zhen and Hui, Kai and Zhao, Zhe and Gupta, Jai and others},
  journal={Advances in Neural Information Processing Systems},
  volume={35},
  pages={21831--21843},
  year={2022}
}

@article{ye2025efficient,
  title={Efficient and Asymptotically Unbiased Constrained Decoding for Large Language Models},
  author={Ye, Haotian and Jain, Himanshu and You, Chong and Suresh, Ananda Theertha and Lin, Haowei and Zou, James and Yu, Felix},
  journal={arXiv preprint arXiv:2504.09135},
  year={2025}
}

@inproceedings{xia2008listwise,
  title={Listwise approach to learning to rank: theory and algorithm},
  author={Xia, Fen and Liu, Tie-Yan and Wang, Jue and Zhang, Wensheng and Li, Hang},
  booktitle={Proceedings of the 25th international conference on Machine learning},
  pages={1192--1199},
  year={2008}
}

@article{reddy2024first,
  title={FIRST: Faster Improved Listwise Reranking with Single Token Decoding},
  author={Reddy, Revanth Gangi and Doo, JaeHyeok and Xu, Yifei and Sultan, Md Arafat and Swain, Deevya and Sil, Avirup and Ji, Heng},
  journal={arXiv preprint arXiv:2406.15657},
  year={2024}
}

@article{liu2009learning,
  title={Learning to rank for information retrieval},
  author={Liu, Tie-Yan and others},
  journal={Foundations and Trends{\textregistered} in Information Retrieval},
  volume={3},
  number={3},
  pages={225--331},
  year={2009},
  publisher={Now Publishers, Inc.}
}

@inproceedings{wang2024large,
  title={Large search model: Redefining search stack in the era of llms},
  author={Wang, Liang and Yang, Nan and Huang, Xiaolong and Yang, Linjun and Majumder, Rangan and Wei, Furu},
  booktitle={ACM SIGIR Forum},
  volume={57},
  number={2},
  pages={1--16},
  year={2024},
  organization={ACM New York, NY, USA}
}

@article{pradeep2023rankzephyr,
  title={RankZephyr: Effective and Robust Zero-Shot Listwise Reranking is a Breeze!},
  author={Pradeep, Ronak and Sharifymoghaddam, Sahel and Lin, Jimmy},
  journal={arXiv preprint arXiv:2312.02724},
  year={2023}
}

@book{luce1959individual,
  title={Individual choice behavior},
  author={Luce, R Duncan and others},
  volume={4},
  year={1959},
  publisher={Wiley New York}
}

@article{plackett1975analysis,
  title={The analysis of permutations},
  author={Plackett, Robin L},
  journal={Journal of the Royal Statistical Society Series C: Applied Statistics},
  volume={24},
  number={2},
  pages={193--202},
  year={1975},
  publisher={Oxford University Press}
}

@inproceedings{cao2007learning,
  title={Learning to rank: from pairwise approach to listwise approach},
  author={Cao, Zhe and Qin, Tao and Liu, Tie-Yan and Tsai, Ming-Feng and Li, Hang},
  booktitle={Proceedings of the 24th international conference on Machine learning},
  pages={129--136},
  year={2007}
}

@article{chen2009ranking,
  title={Ranking measures and loss functions in learning to rank},
  author={Chen, Wei and Liu, Tie-Yan and Lan, Yanyan and Ma, Zhi-Ming and Li, Hang},
  journal={Advances in Neural Information Processing Systems},
  volume={22},
  year={2009}
}

@inproceedings{dhingra2018embedding,
  title={Embedding Text in Hyperbolic Spaces},
  author={Dhingra, Bhuwan and Shallue, Christopher and Norouzi, Mohammad and Dai, Andrew and Dahl, George},
  booktitle={Proceedings of the Twelfth Workshop on Graph-Based Methods for Natural Language Processing (TextGraphs-12)},
  pages={59--69},
  year={2018}
}

@inproceedings{parmar2022hyperbox,
  title={HyperBox: A Supervised Approach for Hypernym Discovery using Box Embeddings},
  author={Parmar, Maulik and Narayan, Apurva},
  booktitle={Proceedings of the Thirteenth Language Resources and Evaluation Conference},
  pages={6069--6076},
  year={2022}
}

@inproceedings{an2023coarse,
  title={Coarse-to-Fine Dual Encoders are Better Frame Identification Learners},
  author={An, Kaikai and Zheng, Ce and Gao, Bofei and Zhao, Haozhe and Chang, Baobao},
  booktitle={Findings of the Association for Computational Linguistics: EMNLP 2023},
  pages={13455--13466},
  year={2023}
}

@article{wu2024effectiveness,
  title={On the effectiveness of sampled softmax loss for item recommendation},
  author={Wu, Jiancan and Wang, Xiang and Gao, Xingyu and Chen, Jiawei and Fu, Hongcheng and Qiu, Tianyu},
  journal={ACM Transactions on Information Systems},
  volume={42},
  number={4},
  pages={1--26},
  year={2024},
  publisher={ACM New York, NY}
}

@article{huang2025towards,
  title={Towards Large-scale Generative Ranking},
  author={Huang, Yanhua and Chen, Yuqi and Cao, Xiong and Yang, Rui and Qi, Mingliang and Zhu, Yinghao and Han, Qingchang and Liu, Yaowei and Liu, Zhaoyu and Yao, Xuefeng and others},
  journal={arXiv preprint arXiv:2505.04180},
  year={2025}
}

@article{pradeep2023rankvicuna,
  title={Rankvicuna: Zero-shot listwise document reranking with open-source large language models},
  author={Pradeep, Ronak and Sharifymoghaddam, Sahel and Lin, Jimmy},
  journal={arXiv preprint arXiv:2309.15088},
  year={2023}
}

@inproceedings{karpukhin2020dense,
  title={Dense Passage Retrieval for Open-Domain Question Answering.},
  author={Karpukhin, Vladimir and Oguz, Barlas and Min, Sewon and Lewis, Patrick SH and Wu, Ledell and Edunov, Sergey and Chen, Danqi and Yih, Wen-tau},
  booktitle={EMNLP (1)},
  pages={6769--6781},
  year={2020}
}

@inproceedings{guo2020accelerating,
  title={Accelerating large-scale inference with anisotropic vector quantization},
  author={Guo, Ruiqi and Sun, Philip and Lindgren, Erik and Geng, Quan and Simcha, David and Chern, Felix and Kumar, Sanjiv},
  booktitle={International Conference on Machine Learning},
  pages={3887--3896},
  year={2020},
  organization={PMLR}
}

@article{nogueira2019passage,
  title={Passage Re-ranking with BERT},
  author={Nogueira, Rodrigo and Cho, Kyunghyun},
  journal={arXiv preprint arXiv:1901.04085},
  year={2019}
}

@inproceedings{metzler2021rethinking,
  title={Rethinking search: making domain experts out of dilettantes},
  author={Metzler, Donald and Tay, Yi and Bahri, Dara and Najork, Marc},
  booktitle={Acm sigir forum},
  volume={55},
  number={1},
  pages={1--27},
  year={2021},
  organization={ACM New York, NY, USA}
}

@article{li2023large,
  title={Large language models for generative recommendation: A survey and visionary discussions},
  author={Li, Lei and Zhang, Yongfeng and Liu, Dugang and Chen, Li},
  journal={arXiv preprint arXiv:2309.01157},
  year={2023}
}

@inproceedings{yuan2024generative,
  title={Generative Dense Retrieval: Memory Can Be a Burden},
  author={Yuan, Peiwen and Wang, Xinglin and Feng, Shaoxiong and Pan, Boyuan and Li, Yiwei and Wang, Heda and Miao, Xupeng and Li, Kan},
  booktitle={18th Conference of the European Chapter of the Association for Computational Linguistics, EACL 2024},
  pages={2835--2845},
  year={2024},
  organization={Association for Computational Linguistics (ACL)}
}

@article{sachan2022improving,
  title={Improving passage retrieval with zero-shot question generation},
  author={Sachan, Devendra Singh and Lewis, Mike and Joshi, Mandar and Aghajanyan, Armen and Yih, Wen-tau and Pineau, Joelle and Zettlemoyer, Luke},
  journal={arXiv preprint arXiv:2204.07496},
  year={2022}
}

@article{drozdov2023parade,
  title={Parade: Passage ranking using demonstrations with large language models},
  author={Drozdov, Andrew and Zhuang, Honglei and Dai, Zhuyun and Qin, Zhen and Rahimi, Razieh and Wang, Xuanhui and Alon, Dana and Iyyer, Mohit and McCallum, Andrew and Metzler, Donald and others},
  journal={arXiv preprint arXiv:2310.14408},
  year={2023}
}

@article{guo2019breaking,
  title={Breaking the glass ceiling for embedding-based classifiers for large output spaces},
  author={Guo, Chuan and Mousavi, Ali and Wu, Xiang and Holtmann-Rice, Daniel N and Kale, Satyen and Reddi, Sashank and Kumar, Sanjiv},
  journal={Advances in Neural Information Processing Systems},
  volume={32},
  year={2019},
}

@article{sun2023chatgpt,
  title={Is ChatGPT good at search? investigating large language models as re-ranking agents},
  author={Sun, Weiwei and Yan, Lingyong and Ma, Xinyu and Wang, Shuaiqiang and Ren, Pengjie and Chen, Zhumin and Yin, Dawei and Ren, Zhaochun},
  journal={arXiv preprint arXiv:2304.09542},
  year={2023}
}

@article{gupta2025scalable,
  title={Scalable In-context Ranking with Generative Models},
  author={Gupta, Nilesh and You, Chong and Bhojanapalli, Srinadh and Kumar, Sanjiv and Dhillon, Inderjit and Yu, Felix},
  journal={arXiv preprint arXiv:2510.05396},
  year={2025}
}

@article{yang2017breaking,
  title={Breaking the softmax bottleneck: A high-rank RNN language model},
  author={Yang, Zhilin and Dai, Zihang and Salakhutdinov, Ruslan and Cohen, William W},
  journal={arXiv preprint arXiv:1711.03953},
  year={2017}
}


\appendix

\newpage
\section{Proofs}\label[appendix]{sec:proofs}

\subsection{Proof of \Cref{prop:arbitrary-tok-level-prob}}

\PropArbitraryTokenLevelProb*

\begin{proof}
($\Rightarrow$) Assume, \textit{reductio ad absurdum}, that $\operatorname{rank}(\Eprime) < |\Vdocids|$. This implies that there exists a vector $h \in \mathbb{R}^{|\Vdocids|}, h \neq \mathbf{0}_{|\Vdocids|}$, that is orthogonal to the span of $\Eprime$, i.e., for any $\varphi^\prime \in \mathbb{R}^{n+1}$: 

\begin{equation}\label{eq:dot-product-0-original}
h^T(\Eprime \varphi^\prime) = \sum_{i=1}^{|\Vdocids|}h_i(\Eprime \varphi^\prime)_i = 0.
\end{equation}

\noindent Let $\mathcal P \coloneqq \{i \mid h_i > 0\}$ and $\mathcal N \coloneqq \{i \mid h_i < 0\}$. Then we can write

\begin{equation}\label{eq:dot-product-0-decomposed}
\sum_{i \in \mathcal P}h_i(\Eprime \varphi^\prime)_i + \sum_{i \in \mathcal N}h_i(\Eprime \varphi^\prime)_i = 0
\end{equation}

\noindent from which it follows that

\begin{equation}\label{eq:dot-product-equality}
\sum_{i \in \mathcal{P}}h_i(\Eprime \varphi^\prime)_i = \sum_{i \in \mathcal N}|h_i|(\Eprime \varphi^\prime)_i,
\end{equation}

\noindent and since $\mathbf{1}_{|\Vdocids|}$ is clearly in the span of $\Eprime$, we also have

\begin{equation}\label{eq:dot-product-equality-S}
S \coloneqq \sum_{i \in \mathcal{P}}h_i = \sum_{i \in \mathcal{N}}|h_i|.
\end{equation}

Since our ARR can produce any strictly positive probability distribution over $\Vdocids$, it must be able to produce hidden vector $\varphi(c)$ corresponding to logits $z = \Edocids\varphi(c)$ with $z_i > z_j \text{ for all } i \in \mathcal N, j \in \mathcal P$ — such logits correspond to a probability distribution in which all token indices for which $h$ takes negative values are assigned greater probability mass than any of the token indices for which $h$ takes positive values.

We can choose $\varphi^\prime \in \mathbb{R}^{n+1}$ such that $\varphi^\prime_{1:n} = \varphi(c)$, and $z^\prime = \Eprime \varphi^\prime = \Edocids\varphi(c) + \varphi^\prime_{n+1}\mathbf{1}$ will have the same relation between values indexed by $\mathcal N$ and by $\mathcal P$ as does $z$, i.e.,

\begin{equation}\label{eq:a-b-inequality}
a \coloneqq  \max \{z^\prime_i \mid i \in \mathcal{P}\} < \min \{z^\prime_i \mid i \in \mathcal{N}\} \eqqcolon b.
\end{equation}

\noindent Recall from \Cref{eq:dot-product-equality} that $\sum_{i \in \mathcal{P}}h_i z^\prime_i = \sum_{i \in \mathcal N}|h_i| z^\prime_i$. Now,

\begin{equation}\label{eq:inequality-aS}
\sum_{i \in \mathcal{P}}h_i z^\prime_i \overset{(\ref{eq:a-b-inequality})}{\leq} \sum_{i \in \mathcal{P}}h_i a = a\sum_{i \in \mathcal{P}}h_i \overset{(\ref{eq:dot-product-equality-S})}{=} aS,
\end{equation}

\noindent and

\begin{equation}\label{eq:inequality-bS}
\sum_{i \in \mathcal{N}}|h_i| z^\prime_i \overset{(\ref{eq:a-b-inequality})}{\geq} \sum_{i \in \mathcal{N}}|h_i| b = b\sum_{i \in \mathcal{N}}|h_i| \overset{(\ref{eq:dot-product-equality-S})}{=} bS.
\end{equation}

\noindent Since $aS < bS$, we get $\sum_{i \in \mathcal{P}}h_i z^\prime_i < \sum_{i \in \mathcal{N}}|h_i| z^\prime_i$, which is a contradiction ($\bot$) with \Cref{eq:dot-product-equality}.

($\Leftarrow$) Assume that $\operatorname{rank}(\Eprime) = |\Vdocids|$, meaning the span of $\Eprime$ is all of $\mathbb{R}^{|\Vdocids|}$, which, when viewed as the space of logits, produces all probability distributions over $\Vdocids$. In other words, for any $p \in \Delta^{|\Vdocids|-1}$ there exists a $\varphi^\prime \in \mathbb{R}^{n+1}$ such that $\sigma(\Eprime \varphi^\prime) = p$. But $\sigma(\Eprime \varphi^\prime) = \sigma(\Edocids \varphi^\prime_{1:n} + \varphi^\prime_{n+1}\mathbf{1}) = \sigma(\Edocids \varphi^\prime_{1:n})$, where the first equality follows from the definition of $\Eprime$ and the second from the shift-invariance of softmax. Since we assume our ARR has infinite capacity, given a context $c$ it can produce $\varphi(c) = \varphi^\prime_{1:n}$, and hence $\sigma(\Edocids \varphi(c)) = p$.
\end{proof}

\subsection{Proof of \Cref{prop:arbitrary-tok-level-ranking}}

\PropArbitraryTokenLevelRanking*

\begin{proof}
($\Rightarrow$) The proof is entirely analogous to the forward proof of \Cref{prop:arbitrary-tok-level-prob}, with the exception that the contradiction targets \textit{permutations} in which all token indices for which $h$ takes negative values \textit{precede} all token indices for which $h$ takes negative values.

($\Leftarrow$) This is an immediate consequence of \Cref{prop:arbitrary-tok-level-prob} — if the ARR can produce any probability distribution over $\Vdocids$, it can also produce any ranking over $\Vdocids$.
\end{proof}

\newpage
\section{Hyperparameter Settings}\label{sec:appendix-hparams}

\subsection{Generative}\label{sec:hparams-generative}

For the generative experiments, we take Mistral-7B-v0.3-it \citep{jiang2024identifying} as the base model for all fine-tuning experiments in this paper, with the following fine-tuning hyperparameters:
\begin{itemize}
    \item \textbf{Optimizer}: Adafactor \citep{shazeer2018adafactor} with $\beta_1=0.9$
    \item \textbf{Learning Rate}: $1 \times 10^{-5}$
    \item \textbf{Learning Rate Schedule}: linear warmup for 50 steps followed by a cosine decay.
    \item \textbf{Dataset-Specific Settings} ($10k$ steps):
    \begin{itemize}
        \item \textbf{WordNet}
        \begin{itemize}
            \item \textbf{Max Input length}: 1024
            \item \textbf{Global Batch Size}: 64, accumulated across replicas
            \item \underline{Top-1 docID} ($\lambda(r)=\mathbb{I}_{r=1}$): $7.65$ epochs over $84k$ examples
            \item \underline{Each docID} ($\lambda(r)=\frac{1}{r^\alpha}$ or $\lambda(r)=\frac{n_q - r + 1}{n_q}$): $0.91$ epochs over $700k$ examples
        \end{itemize}
        \item \textbf{ESCI}
        \begin{itemize}
            \item \textbf{Max Input Length}: 8192
            \item \textbf{Global Batch Size}: 32, accumulated across replicas
            \item \underline{Top-1 docID} ($\lambda(r)=\mathbb{I}_{r=1}$): $3.76$ epochs over $85k$ examples
        \end{itemize}
    \end{itemize}
    \item \textbf{Weight Decay}: no weight decay
    \item \textbf{Gradient Clipping}: gradient norm clipped to 1.0
\end{itemize}

\subsection{Dual Encoders}\label{sec:hparams-de}

For Dual Encoders we train for 50k steps with a batch size of 2k. We use SGD optimizer with a learning rate of 1.0 and momentum of 0.9. The embeddings are normalized by projecting each row of the embedding table to the unit $\ell_2$ sphere after each SGD update.

\subsection{Cross Encoders}\label{sec:hparams-ce}

For Cross Encoders we train for 20k steps with a batch size of 2k. 
We use ADAM optimizer with a learning rate of 0.001 and weight decay of 0.001. 
All weight matrices in MLP are initialized with an i.i.d. Gaussian of zero mean and a variance of $1/(2n)$.
All bias terms in MLP are initialized as zero vectors.

\newpage
\section{Dataset Statistics}\label{sec:dataset-statistics}

\begin{figure}[H]
    \centering
    \includegraphics[width=0.9\linewidth]{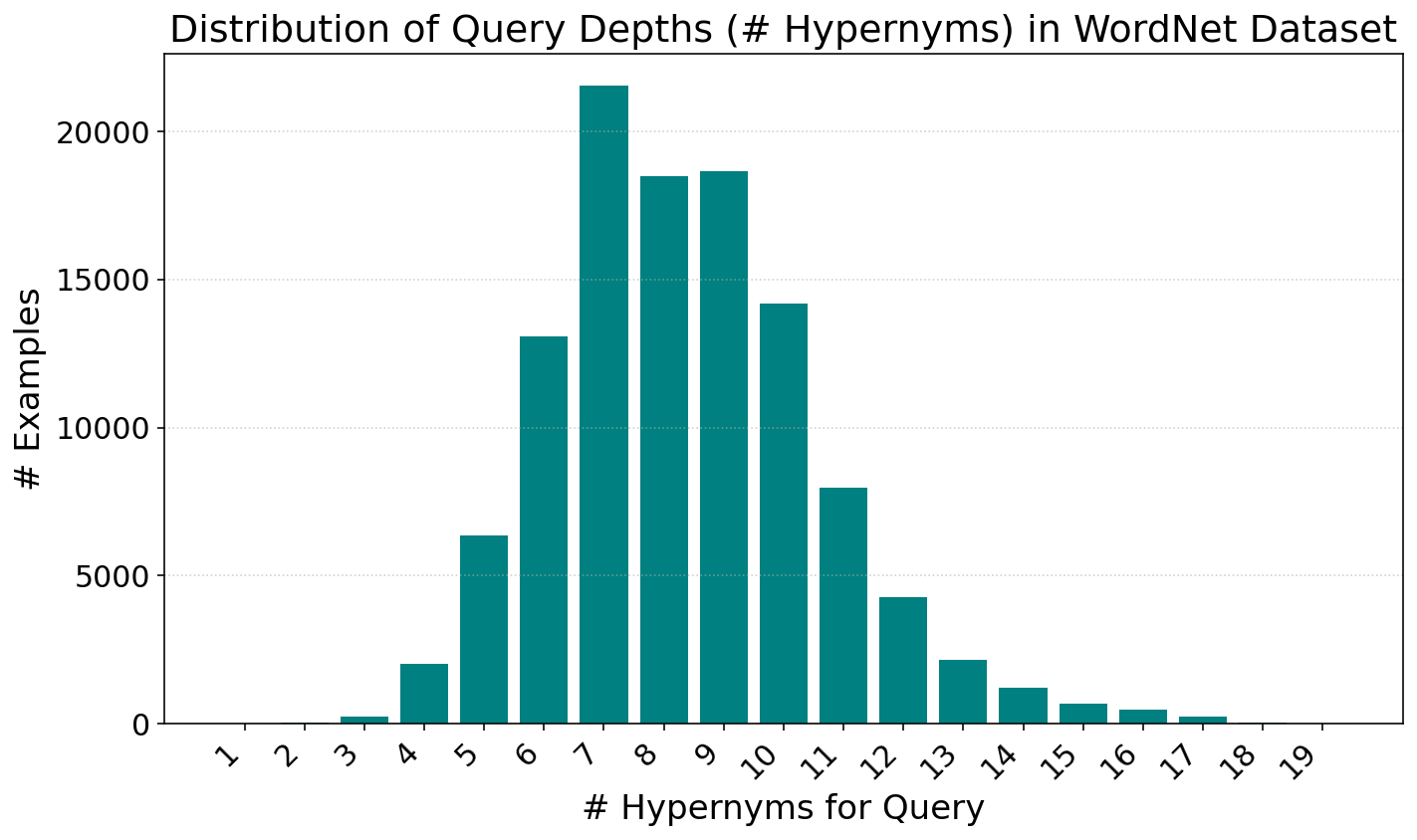}
    \label{fig:wordnet-distr}
\end{figure}    
\vspace{1cm}
\begin{figure}[H]
    \centering
    \includegraphics[width=0.9\linewidth]{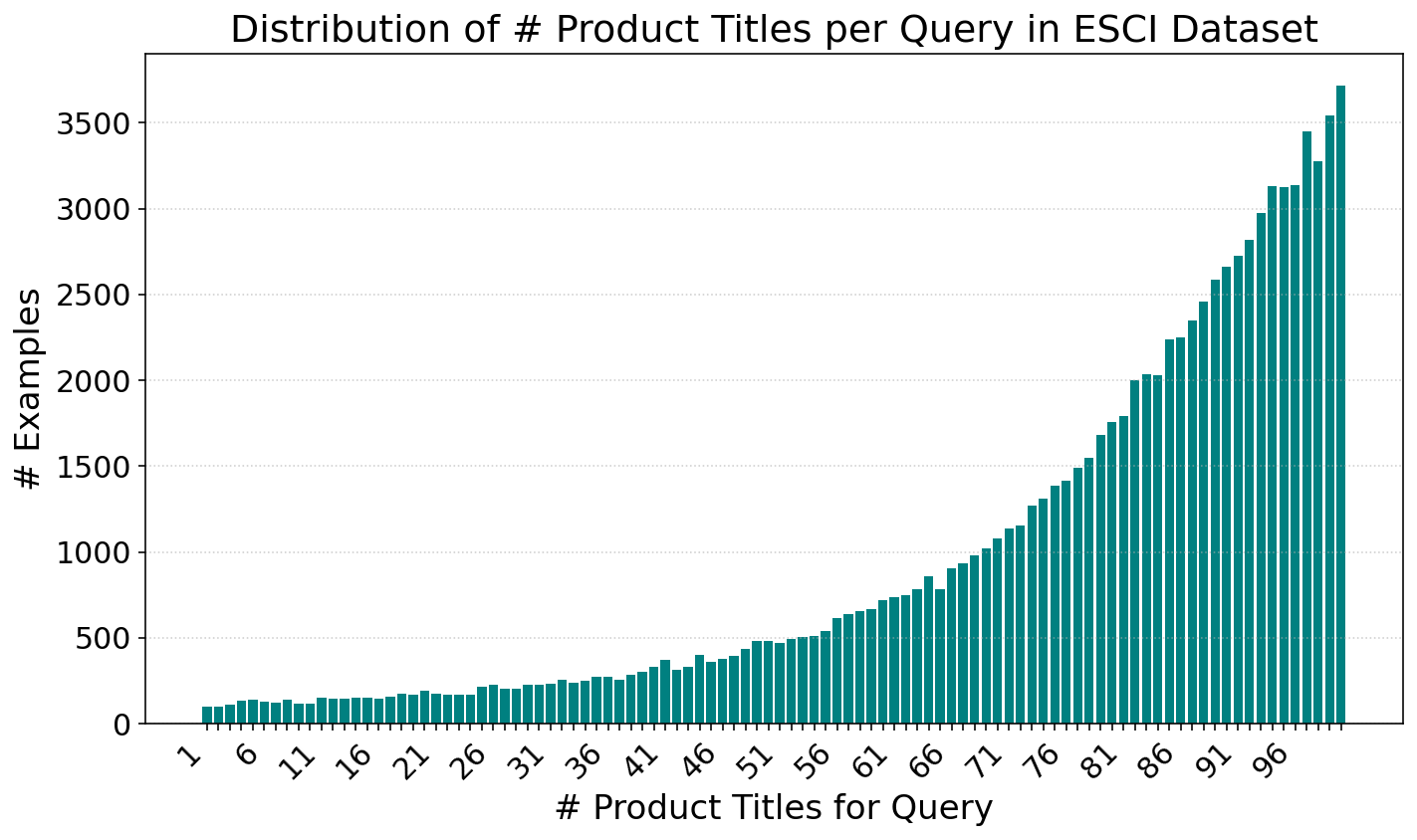}
    \label{fig:esci-distr}
\end{figure}

\newpage
\section{Zero-Shot Prompt Example}\label{sec:appendix-prompt}

\subsection{WordNet Prompt}\label{sec:wn-prompt}

\begin{tcolorbox}[promptbox]
\ttfamily
You are a helpful assistant that can answer questions about WordNet. You are given a noun synset from WordNet and a set of hypernyms for that synset. The hypernyms are separated by the delimiter `||'. Your task is to output the most specific hypernym for the synset out of the given set of hypernyms.

\vspace{0.5em} 

Synset:
deer.n.01

\vspace{0.5em} 

Hypernyms:
object.n.01 || whole.n.02 || chordate.n.01 || physical\_entity.n.01 || mammal.n.01 || ruminant.n.01 || vertebrate.n.01 || placental.n.01 || ungulate.n.01 || even-toed\_ungulate.n.01 || animal.n.01 || entity.n.01 || living\_thing.n.01 || organism.n.01

\vspace{0.5em} 

Most specific hypernym:
\end{tcolorbox}

\subsection{ESCI Prompt}\label{sec:esci-prompt}

\begin{tcolorbox}[promptbox]
\ttfamily
You are a helpful assistant that can answer questions about the ESCI shopping queries dataset. You are given a query and a list of product titles for that query. Each product title is preceded by its docID in square brackets, and the product titles are separated by the delimiter '||'. Your task is to generate the docID of the most relevant product title to the query out of the given set of product titles and their docIDs.

\vspace{0.5em} 

Query:
aukey usb c

\vspace{0.5em} 

Product titles:
[25,36,39] Syntech USB C to USB Adapter, 2 Pack USB C to USB3 Adapter,USB Type C to USB,Thunderbolt 3 to USB Female Adapter OTG Cable || [25,69,59] SSK USB C 10Gbps Hub, 4-in-1 SuperSpeed USB 10Gbps Type C Multiport Adapter with 2 USB C 2 USB A 3.1/3.2 Gen2 10Gbps Ports,USB C Dock for iMac/MacBook/Pro/Air/Surface Pro and More Type C Devices || [25,36,26] AUKEY Focus 63W USB C Charger 60W PD Charger Power Delivery 3.0 [ GaN Power ] Fast Charger Dual Port USB C Wall Charger for MacBook Pro Air 13" 15" iPhone 11, Pro,Max SE, Google Pixel 4 3 XL, Switch || [25,59,45] USB Type C Female to USB Male Adapter USB to Type C Laptops Adapter for iPhone 11 Pro Max Airpods iPad for Samsung Galaxy S20 Ultra Google Pixel 2Pack || [25,59,2] USB C Cable,Aupek 10FT USB Type C Cable Nylon Braided Fast Charger Cord for Nintendo Switch, Google Pixel,Samsung Galaxy Note 8 S9 S8 S8 Plus S9(Purple) || [25,39,59] AUKEY USB Car Charger, 18w Quick Charge 3.0, Flush Fit Cell Phone Adapter for iPhone 12 Pro Max/11 Pro Max/XS/XR, Samsung Galaxy Note 9 / S9 / Note 10 / S10, and More || [26,2,77] AUKEY Mechanical Keyboard Blue Switch, 104-Key RGB Backlit Gaming Keyboard with Customizable Lighting Effects, Aluminium USB Wired Keyboard for Gaming and Typing || [25,2,46] AUKEY Power Strip with 4 AC Outlets and 4 USB Charging Ports, 5-Foot Extension Cord for Smartphone, Laptop, Tablet, Home, Office and More (White) || [25,36,97] AUKEY Focus iPhone Fast Charger 30W 2-Port USB C Charger for iPhone 12/12 Mini/12 Pro Max, PD 3.0 Fast Charger, USB C Wall Charger for iPhone 11 Pro Max/8 Plus, Pixel 5, MacBook Air, iPad Pro, Switch || [36,25,31] USB C Cable, 2Pack 3ft 5ft || [25,26,46] AUKEY USB C Cargador con GAN, Cargador de Pared USB con 60W Power Delivery 3.0, Compatible con MacBook Pro 13", iPhone 11 Pro, DELL XPS 13, HP Spectre, Lenevo Thinkpad, Nintendo Switch || [25,36,45] AUKEY USB C Adapter, [3 Pack] USB C to USB 3.0 Adapter Compatible with MacBook Pro 2017/2016 , Google Chromebook Pixelbook , Samsung Galaxy S9 S8 S8+ Note8, Google Pixel 2/2XL - Black || [25,36,39] USB C to USB Adapter (3 Pack), Warmstor Type C Female to USB 3.0 A Male Converter Support Data Sync \& Charging Compatible with iPhone, iPad, Samsung, Pixel, Laptops, PC, Power Banks, Chargers

\vspace{0.5em} 

Most relevant product title:
\end{tcolorbox}

\newpage
\section{WordNet: \Cref{tab:wordnet-results} Visualizations}\label{wordnet:comparative-visualizations}

\begin{figure}[H]
    \centering

    \begin{subfigure}[b]{0.48\textwidth}
        \includegraphics[width=\textwidth]{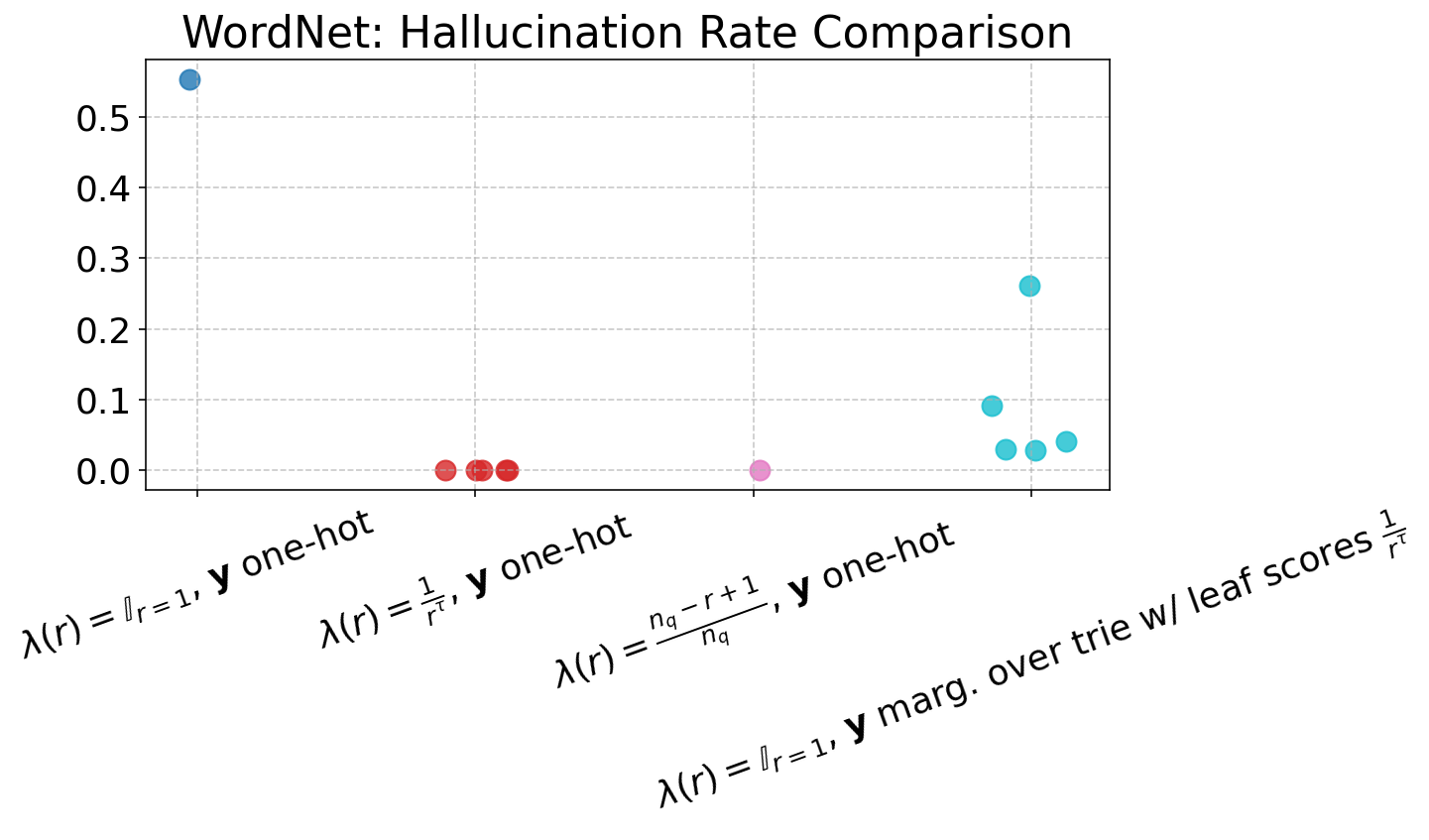}
    \end{subfigure}
    \hfill
    \begin{subfigure}[b]{0.48\textwidth}
        \includegraphics[width=\textwidth]{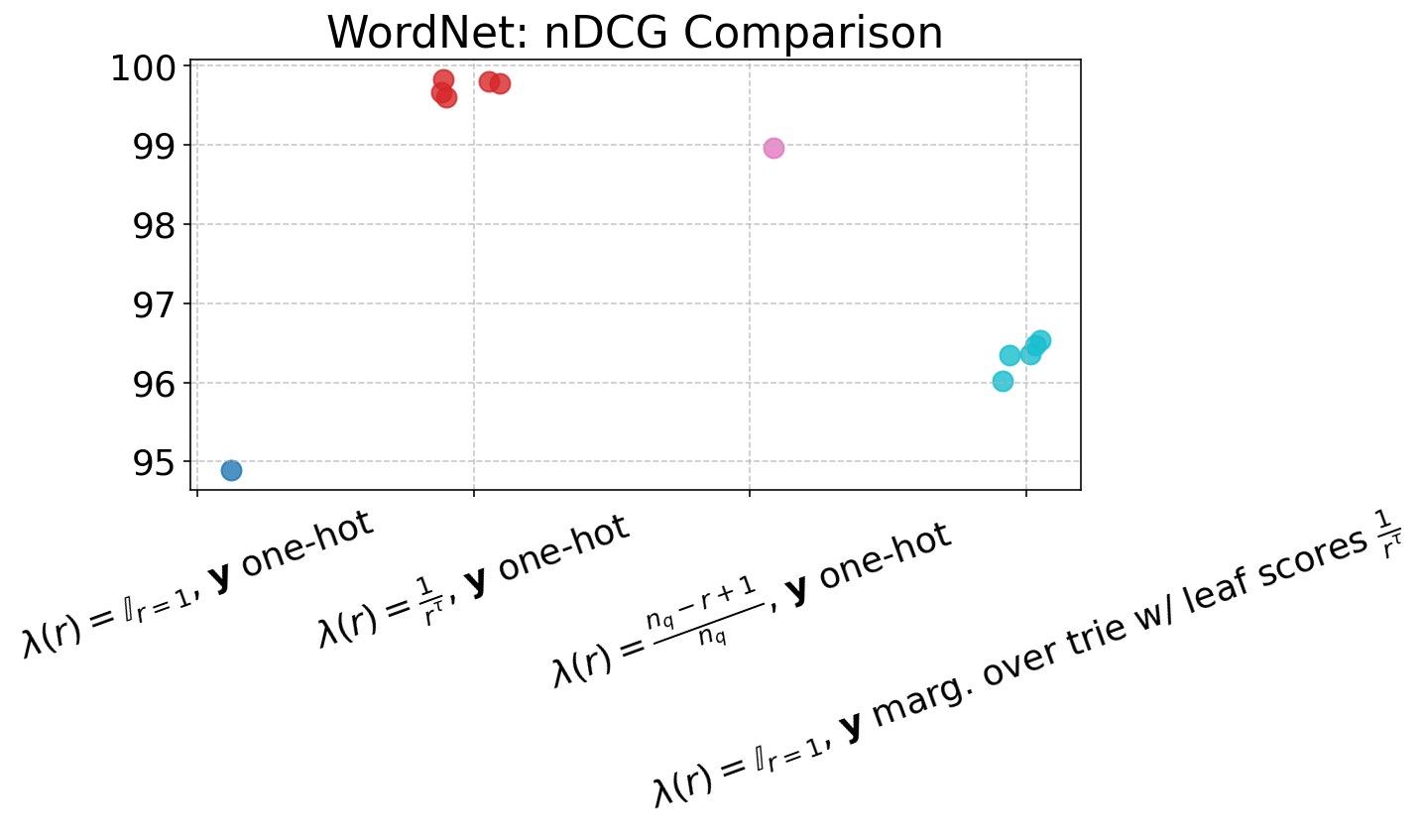}
    \end{subfigure}

    \vspace{1em}

    \begin{subfigure}[b]{0.48\textwidth}
        \includegraphics[width=\textwidth]{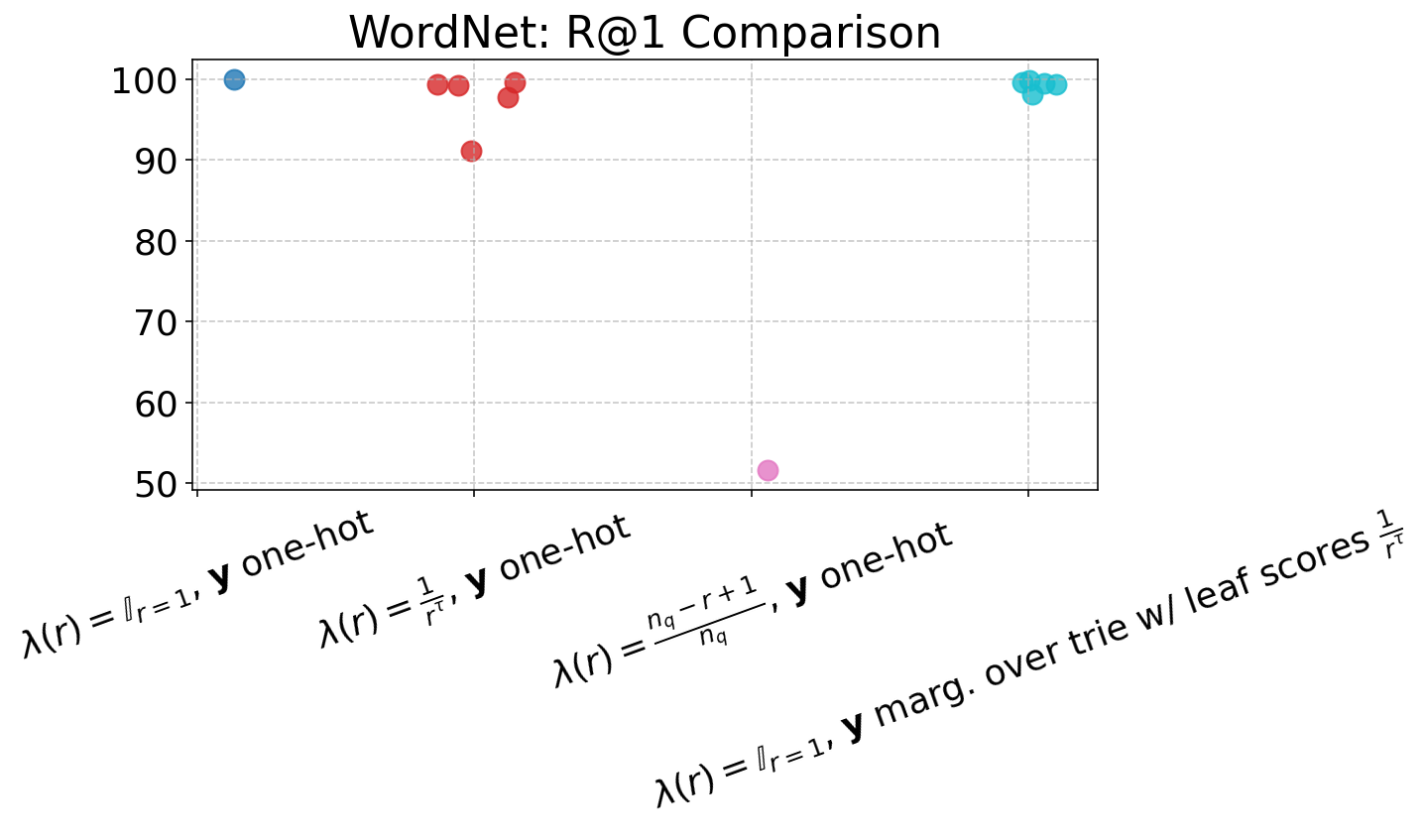}
    \end{subfigure}
    \hfill
    \begin{subfigure}[b]{0.48\textwidth}
        \includegraphics[width=\textwidth]{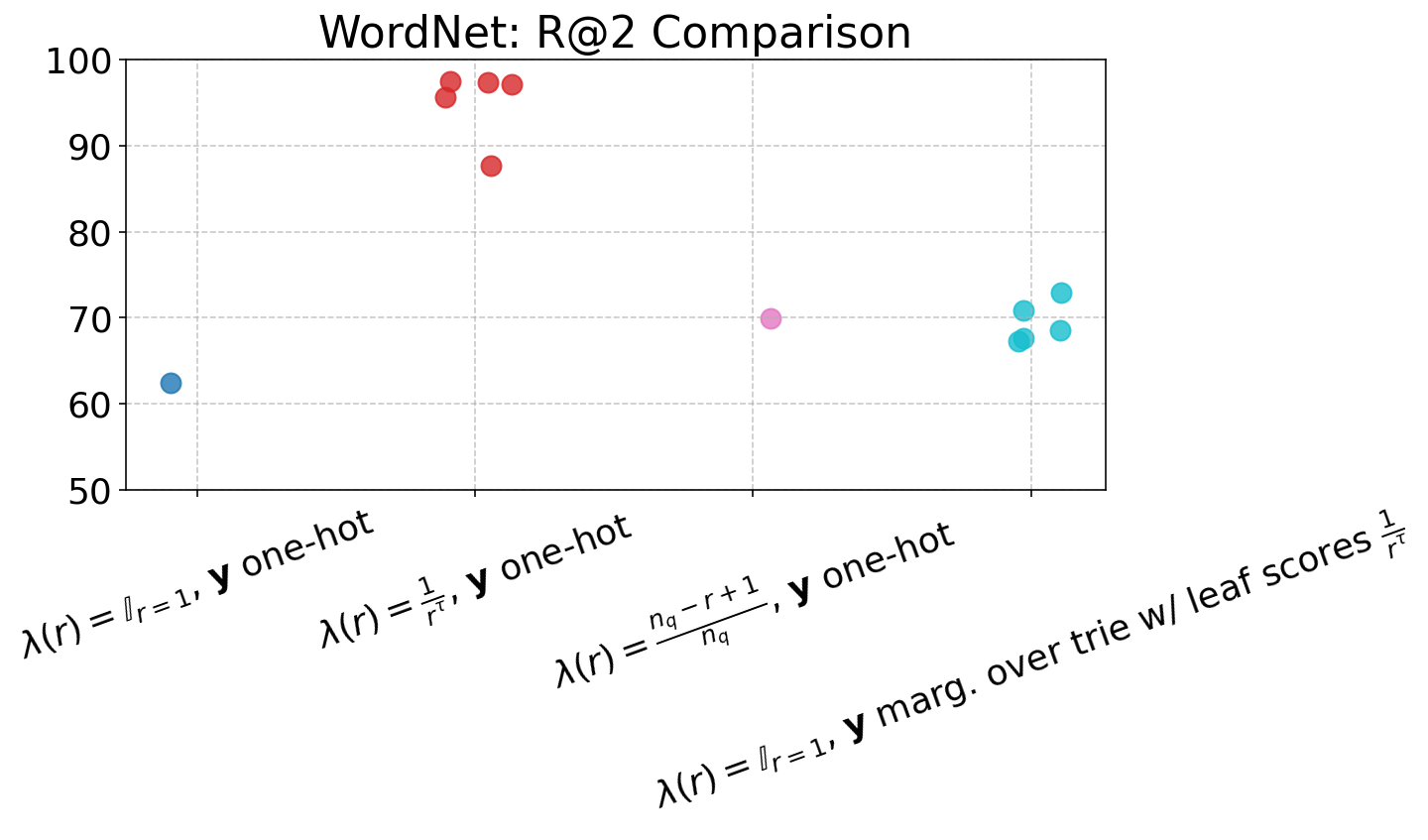}
    \end{subfigure}

    \vspace{1em}

    \begin{subfigure}[b]{0.48\textwidth}
        \includegraphics[width=\textwidth]{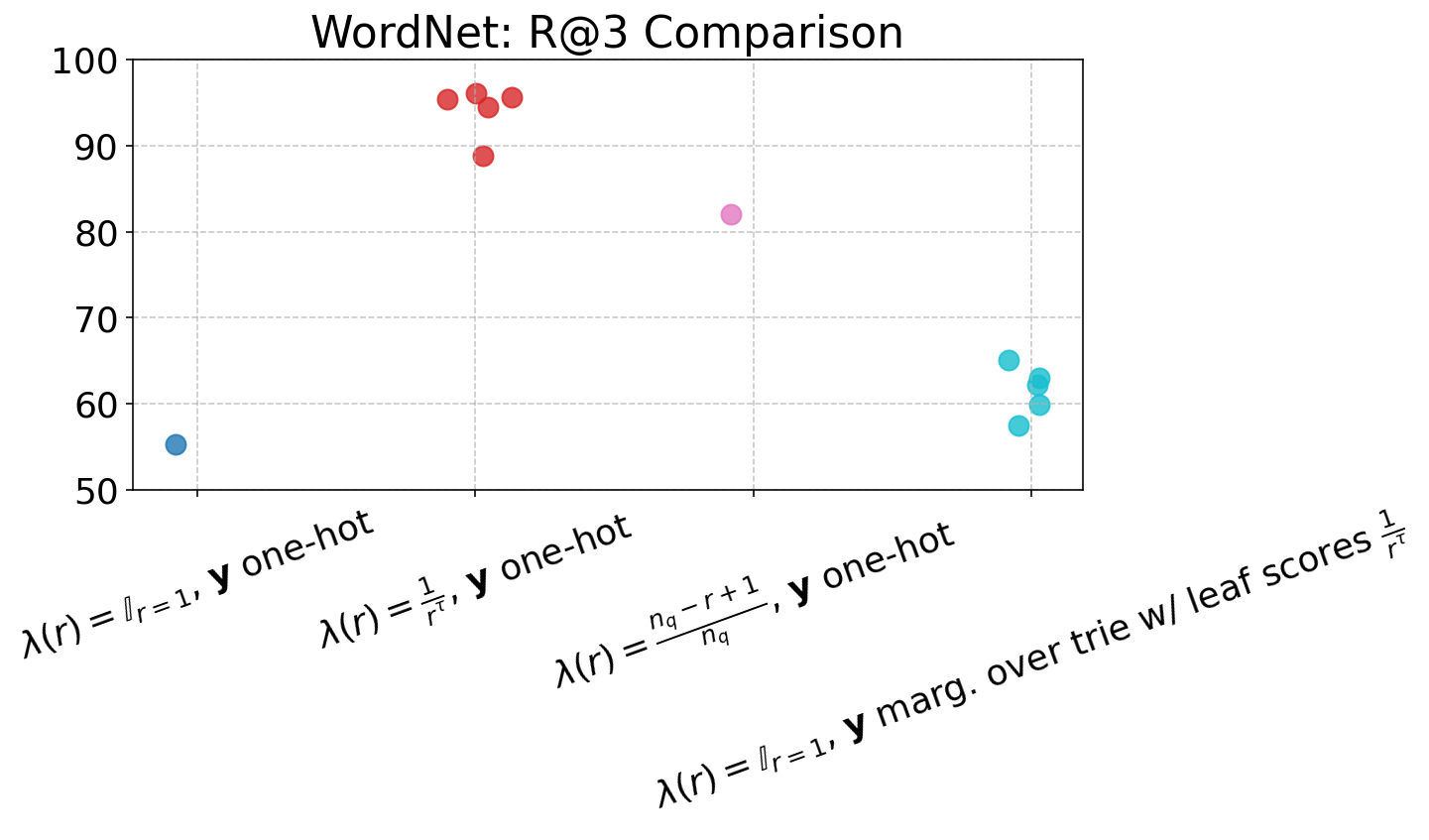}
    \end{subfigure}
    \hfill
    \begin{subfigure}[b]{0.48\textwidth}
        \includegraphics[width=\textwidth]{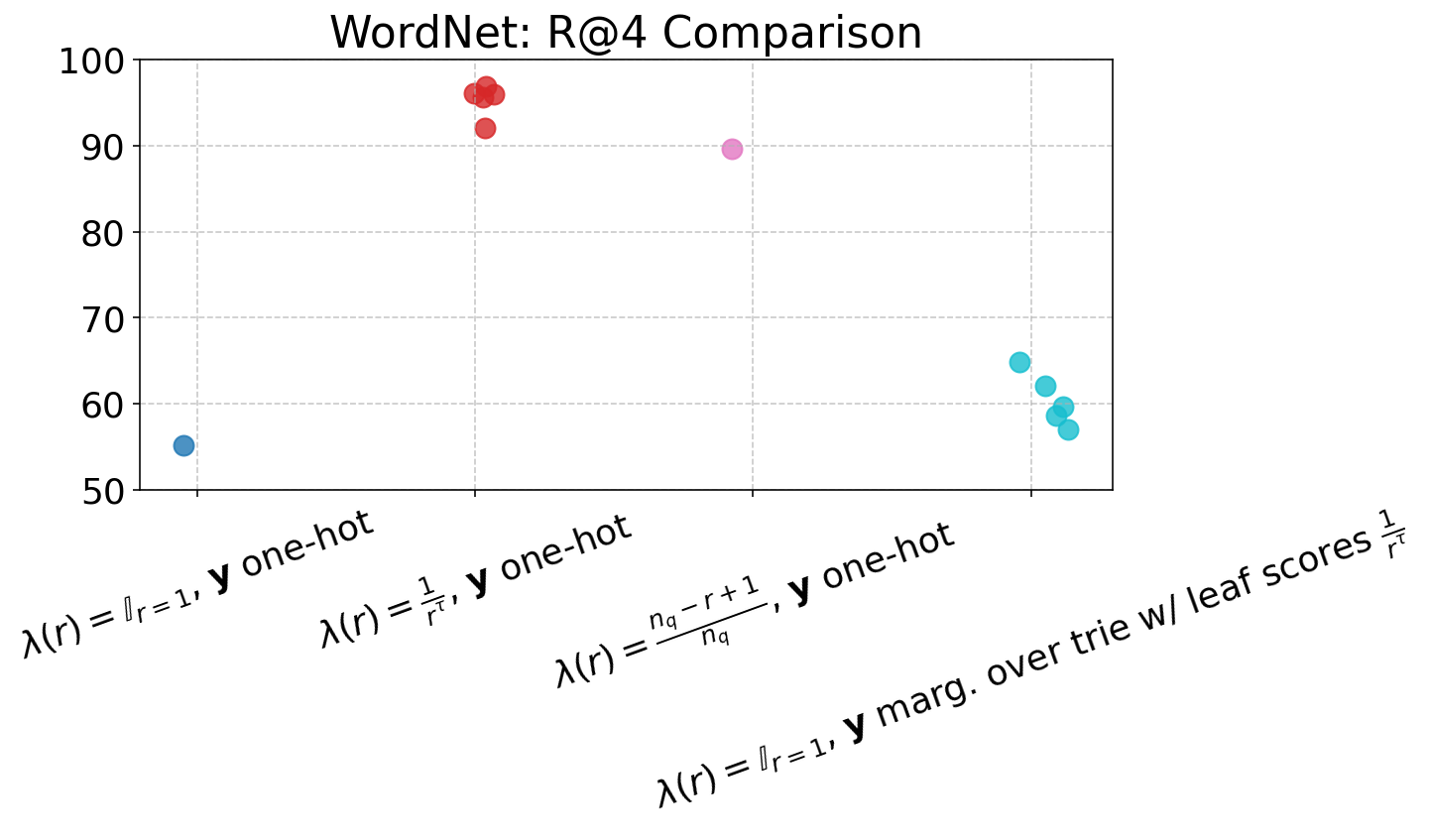}
    \end{subfigure}

    \vspace{1em}

    \begin{subfigure}[b]{0.48\textwidth}
        \includegraphics[width=\textwidth]{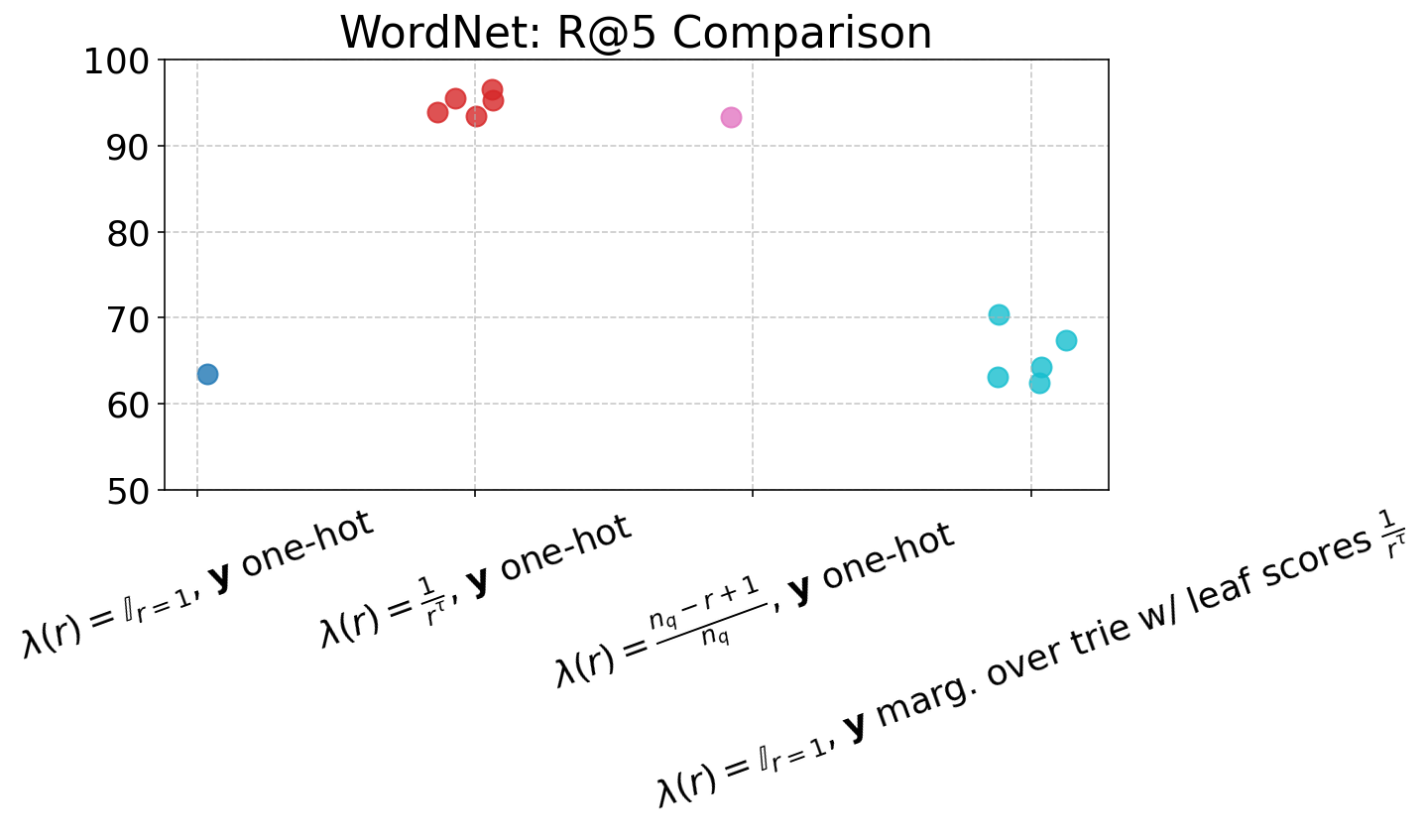}
    \end{subfigure}
\end{figure}

\newpage
\section{WordNet: Combined Item-and-Token Loss Results}

We provide supplemental results for fractional item-level reweighting \textit{combined with} the marginalized-over-trie target distributions $\mathbf{y}(r,t)$, i.e., leveraging the full expressivity of \Cref{eq:general-loss-formulation}. We take the temperature $\alpha=2$ which yielded best performance for R@4 and R@5 in \Cref{tab:wordnet-results}, and sweep over a range of $\beta$ temperatures for assigning scores to docID nodes in the prefix tree to be summed over.

In order to construct $\mathbf{y}(r,t)$ for docID of rank $r$, we only use docIDs $d_r, d_{r+1}, \ldots, d_{n_q}$ in the trie creation process. For example, if we are considering the contribution of the $3^\text{rd}$-ranked docID, which is already down-weighted by $\lambda(3)=\frac{1}{3^\alpha}$, we want $\mathbf{y}(r,t)$ to smooth over the top-$3^\text{rd}$, top-$4^\text{th}$, top-$5^\text{th}$ docID paths in the prefix tree without the target distributions being skewed by the top-$1^\text{st}$ and top-$2^\text{nd}$ docIDs.

\begin{table}[htbp]
\small
\centering
\resizebox{\textwidth}{!}{%
\begin{tabular}{c c ccccccc}
\toprule
& & CVR ($\downarrow$) & nDCG ($\uparrow$) & R@1 ($\uparrow$) & R@2 ($\uparrow$) & R@3 ($\uparrow$) & R@4 ($\uparrow$) & R@5 ($\uparrow$) \\
\midrule
\multirow{9}{*}{\parbox{3.2cm}{\centering $\lambda(r) = \frac{1}{r^2}$ \\ $\mathbf{y}$ marg. over trie w/ leaf scores $\frac{1}{r^\tau}$}}
& $\beta=1$ & 0.0\% & 99.12 & 73.38 & 83.37 & 86.83 & 90.08 & 91.51 \\
& $\beta=2$ & 0.02\% & 99.74 & 92.36 & 94.45 & 94.73 & 95.39 & \textbf{96.11} \\
& $\beta=3$ & 0.0\% & 99.71 & 94.28 & 94.1 & 94.49 & 95.46 & 96.06 \\
& $\beta=4$ & 0.0\% & 99.7 & 94.9 & 94.3 & 94.28 & 95.53 & 95.67 \\
& $\beta=5$ & 0.0\% & 99.74 & 95.92 & 94.54 & 95.17 & 95.38 & 95.7 \\
& $\beta=6$ & 0.0\% & 99.77 & 97.68 & \textbf{96.29} & \textbf{95.67} & 95.7 & 95.69 \\
& $\beta=7$ & 0.0\% & \textbf{99.83} & 97.82 & 95.74 & 94.83 & \textbf{95.79} & 95.93 \\
& $\beta=9$ & 0.0\% & 99.76 & \textbf{98.32} & 95.72 & 94.86 & 94.85 & 95.19 \\
& $\beta=25$ & 0.0\% & 99.71 & 97.38 & 94.42 & 94.64 & 94.82 & 95.05 \\
\bottomrule
\end{tabular}
}
\caption{WordNet results for combined item-and-token loss (evaluated over 5000 (query, targets)-examples)}
\label{tab:wordnet-combined-loss-from-checkpoint}
\end{table}

Note that as $\beta \rightarrow \infty$, the target distributions $\mathbf{y}(r,t)$ looks more and more like one-hots and the results in \Cref{tab:wordnet-combined-loss-from-checkpoint} should converge to those obtained with just item-level reweighting $\lambda(r)=\frac{1}{r^\alpha}, \alpha=2$ in \Cref{tab:wordnet-results}. While the nDCG score in \Cref{tab:wordnet-combined-loss-from-checkpoint} improves as $\beta$ increases from 1 to 7, unfortunately this trend seems to stop for larger $\beta$ and we do not see it surpass the $\lambda(r)=\frac{1}{r^2}$ results with $\mathbf{y}$ one-hots from \Cref{tab:wordnet-results} at any point. We leave optimizing for the ``sweet spot'' combination of item-level and token-level hyperparameters to future work.


\end{document}